\documentclass[reqno,12pt,a4paper]{amsart}

\usepackage{mathrsfs}
\usepackage{amssymb}
\usepackage{amsfonts}
\usepackage{latexsym}
\usepackage{amsthm}

\usepackage{graphicx}
\def\lb{\label}

\newcommand{\er}[1]{\textrm{(\ref{#1})}}

\begin{document}


\renewcommand{\theequation}{\arabic{section}.\arabic{equation}}
\theoremstyle{plain}
\newtheorem{theorem}{\bf Theorem}[section]
\newtheorem{lemma}[theorem]{\bf Lemma}
\newtheorem{corollary}[theorem]{\bf Corollary}
\newtheorem{proposition}[theorem]{\bf Proposition}
\newtheorem{definition}[theorem]{\bf Definition}
\newtheorem{remark}[theorem]{\it Remark}

\def\a{\alpha}  \def\cA{{\mathcal A}}     \def\bA{{\bf A}}  \def\mA{{\mathscr A}}
\def\b{\beta}   \def\cB{{\mathcal B}}     \def\bB{{\bf B}}  \def\mB{{\mathscr B}}
\def\g{\gamma}  \def\cC{{\mathcal C}}     \def\bC{{\bf C}}  \def\mC{{\mathscr C}}
\def\G{\Gamma}  \def\cD{{\mathcal D}}     \def\bD{{\bf D}}  \def\mD{{\mathscr D}}
\def\d{\delta}  \def\cE{{\mathcal E}}     \def\bE{{\bf E}}  \def\mE{{\mathscr E}}
\def\D{\Delta}  \def\cF{{\mathcal F}}     \def\bF{{\bf F}}  \def\mF{{\mathscr F}}
\def\c{\chi}    \def\cG{{\mathcal G}}     \def\bG{{\bf G}}  \def\mG{{\mathscr G}}
\def\z{\zeta}   \def\cH{{\mathcal H}}     \def\bH{{\bf H}}  \def\mH{{\mathscr H}}
\def\e{\eta}    \def\cI{{\mathcal I}}     \def\bI{{\bf I}}  \def\mI{{\mathscr I}}
\def\p{\psi}    \def\cJ{{\mathcal J}}     \def\bJ{{\bf J}}  \def\mJ{{\mathscr J}}
\def\vT{\Theta} \def\cK{{\mathcal K}}     \def\bK{{\bf K}}  \def\mK{{\mathscr K}}
\def\k{\kappa}  \def\cL{{\mathcal L}}     \def\bL{{\bf L}}  \def\mL{{\mathscr L}}
\def\l{\lambda} \def\cM{{\mathcal M}}     \def\bM{{\bf M}}  \def\mM{{\mathscr M}}
\def\L{\Lambda} \def\cN{{\mathcal N}}     \def\bN{{\bf N}}  \def\mN{{\mathscr N}}
\def\m{\mu}     \def\cO{{\mathcal O}}     \def\bO{{\bf O}}  \def\mO{{\mathscr O}}
\def\n{\nu}     \def\cP{{\mathcal P}}     \def\bP{{\bf P}}  \def\mP{{\mathscr P}}
\def\r{\rho}    \def\cQ{{\mathcal Q}}     \def\bQ{{\bf Q}}  \def\mQ{{\mathscr Q}}
\def\s{\sigma}  \def\cR{{\mathcal R}}     \def\bR{{\bf R}}  \def\mR{{\mathscr R}}
\def\S{\Sigma}  \def\cS{{\mathcal S}}     \def\bS{{\bf S}}  \def\mS{{\mathscr S}}
\def\t{\tau}    \def\cT{{\mathcal T}}     \def\bT{{\bf T}}  \def\mT{{\mathscr T}}
\def\f{\phi}    \def\cU{{\mathcal U}}     \def\bU{{\bf U}}  \def\mU{{\mathscr U}}
\def\F{\Phi}    \def\cV{{\mathcal V}}     \def\bV{{\bf V}}  \def\mV{{\mathscr V}}
\def\P{\Psi}    \def\cW{{\mathcal W}}     \def\bW{{\bf W}}  \def\mW{{\mathscr W}}
\def\o{\omega}  \def\cX{{\mathcal X}}     \def\bX{{\bf X}}  \def\mX{{\mathscr X}}
\def\x{\xi}     \def\cY{{\mathcal Y}}     \def\bY{{\bf Y}}  \def\mY{{\mathscr Y}}
\def\X{\Xi}     \def\cZ{{\mathcal Z}}     \def\bZ{{\bf Z}}  \def\mZ{{\mathscr Z}}
\def\O{\Omega}

\def\be{{\bf e}}  \def\bp{{\bf p}} \def\bq{{\bf q}}  \def\br{{\bf r}}
\def\bv{{\bf v}} \def\bu{{\bf u}}
\def\Om{\Omega}
\def\bbD{\pmb \Delta}
\def\mm{\mathrm m}
\def\mn{\mathrm n}

\newcommand{\mc}{\mathscr {c}}

\newcommand{\gA}{\mathfrak{A}}          \newcommand{\ga}{\mathfrak{a}}
\newcommand{\gB}{\mathfrak{B}}          \newcommand{\gb}{\mathfrak{b}}
\newcommand{\gC}{\mathfrak{C}}          \newcommand{\gc}{\mathfrak{c}}
\newcommand{\gD}{\mathfrak{D}}          \newcommand{\gd}{\mathfrak{d}}
\newcommand{\gE}{\mathfrak{E}}
\newcommand{\gF}{\mathfrak{F}}           \newcommand{\gf}{\mathfrak{f}}
\newcommand{\gG}{\mathfrak{G}}           
\newcommand{\gH}{\mathfrak{H}}           \newcommand{\gh}{\mathfrak{h}}
\newcommand{\gI}{\mathfrak{I}}           \newcommand{\gi}{\mathfrak{i}}
\newcommand{\gJ}{\mathfrak{J}}           \newcommand{\gj}{\mathfrak{j}}
\newcommand{\gK}{\mathfrak{K}}            \newcommand{\gk}{\mathfrak{k}}
\newcommand{\gL}{\mathfrak{L}}            \newcommand{\gl}{\mathfrak{l}}
\newcommand{\gM}{\mathfrak{M}}            \newcommand{\gm}{\mathfrak{m}}
\newcommand{\gN}{\mathfrak{N}}            \newcommand{\gn}{\mathfrak{n}}
\newcommand{\gO}{\mathfrak{O}}
\newcommand{\gP}{\mathfrak{P}}             \newcommand{\gp}{\mathfrak{p}}
\newcommand{\gQ}{\mathfrak{Q}}             \newcommand{\gq}{\mathfrak{q}}
\newcommand{\gR}{\mathfrak{R}}             \newcommand{\gr}{\mathfrak{r}}
\newcommand{\gS}{\mathfrak{S}}              \newcommand{\gs}{\mathfrak{s}}
\newcommand{\gT}{\mathfrak{T}}             \newcommand{\gt}{\mathfrak{t}}
\newcommand{\gU}{\mathfrak{U}}             \newcommand{\gu}{\mathfrak{u}}
\newcommand{\gV}{\mathfrak{V}}             \newcommand{\gv}{\mathfrak{v}}
\newcommand{\gW}{\mathfrak{W}}             \newcommand{\gw}{\mathfrak{w}}
\newcommand{\gX}{\mathfrak{X}}               \newcommand{\gx}{\mathfrak{x}}
\newcommand{\gY}{\mathfrak{Y}}              \newcommand{\gy}{\mathfrak{y}}
\newcommand{\gZ}{\mathfrak{Z}}             \newcommand{\gz}{\mathfrak{z}}

\def\ve{\varepsilon} \def\vt{\vartheta} \def\vp{\varphi}  \def\vk{\varkappa}
\def\vr{\varrho} \def\vs{\varsigma}

\def\A{{\mathbb A}} \def\B{{\mathbb B}} \def\C{{\mathbb C}}
\def\dD{{\mathbb D}} \def\E{{\mathbb E}} \def\dF{{\mathbb F}} \def\dG{{\mathbb G}}
\def\H{{\mathbb H}}\def\I{{\mathbb I}} \def\J{{\mathbb J}} \def\K{{\mathbb K}}
\def\dL{{\mathbb L}}\def\M{{\mathbb M}} \def\N{{\mathbb N}} \def\dO{{\mathbb O}}
\def\dP{{\mathbb P}} \def\dQ{{\mathbb Q}} \def\R{{\mathbb R}}\def\dS{{\mathbb S}} \def\T{{\mathbb T}}
\def\U{{\mathbb U}} \def\V{{\mathbb V}}\def\W{{\mathbb W}} \def\X{{\mathbb X}}
\def\Y{{\mathbb Y}} \def\Z{{\mathbb Z}}

\def\dk{{\Bbbk}}


\def\ua{\uparrow}                \def\da{\downarrow}
\def\lra{\leftrightarrow}        \def\Lra{\Leftrightarrow}


\def\lt{\biggl}                  \def\rt{\biggr}
\def\ol{\overline}               \def\wt{\widetilde}
\def\no{\noindent}               \def\ti{\tilde}
\def\ul{\underline}


\let\ge\geqslant                 \let\le\leqslant
\def\la{\langle}                \def\ra{\rangle}
\def\/{\over}                    \def\iy{\infty}
\def\sm{\setminus}               \def\es{\emptyset}
\def\ss{\subset}                 \def\ts{\times}
\def\pa{\partial}                \def\os{\oplus}
\def\om{\ominus}                 \def\ev{\equiv}
\def\iint{\int\!\!\!\int}        \def\iintt{\mathop{\int\!\!\int\!\!\dots\!\!\int}\limits}
\def\el2{\ell^{\,2}}             \def\1{1\!\!1}
\def\wh{\widehat}

\def\sh{\mathop{\mathrm{sh}}\nolimits}
\def\ch{\mathop{\mathrm{ch}}\nolimits}

\def\where{\mathop{\mathrm{where}}\nolimits}
\def\as{\mathop{\mathrm{as}}\nolimits}
\def\Area{\mathop{\mathrm{Area}}\nolimits}
\def\arg{\mathop{\mathrm{arg}}\nolimits}
\def\const{\mathop{\mathrm{const}}\nolimits}
\def\det{\mathop{\mathrm{det}}\nolimits}
\def\diag{\mathop{\mathrm{diag}}\nolimits}
\def\diam{\mathop{\mathrm{diam}}\nolimits}
\def\dim{\mathop{\mathrm{dim}}\nolimits}
\def\dist{\mathop{\mathrm{dist}}\nolimits}
\def\Im{\mathop{\mathrm{Im}}\nolimits}
\def\Iso{\mathop{\mathrm{Iso}}\nolimits}
\def\Ker{\mathop{\mathrm{Ker}}\nolimits}
\def\Lip{\mathop{\mathrm{Lip}}\nolimits}
\def\rank{\mathop{\mathrm{rank}}\limits}
\def\Ran{\mathop{\mathrm{Ran}}\nolimits}
\def\Re{\mathop{\mathrm{Re}}\nolimits}
\def\Res{\mathop{\mathrm{Res}}\nolimits}
\def\res{\mathop{\mathrm{res}}\limits}
\def\sign{\mathop{\mathrm{sign}}\nolimits}
\def\span{\mathop{\mathrm{span}}\nolimits}
\def\supp{\mathop{\mathrm{supp}}\nolimits}
\def\Tr{\mathop{\mathrm{Tr}}\nolimits}
\def\BBox{\hspace{1mm}\vrule height6pt width5.5pt depth0pt \hspace{6pt}}



\def\na{\mathop{\mathrm{\nabla}}\nolimits}

\def\all{\mathop{\mathrm{all}}\nolimits}

\def\arg{\mathop{\mathrm{arg}}\nolimits}
\def\const{\mathop{\mathrm{const}}\nolimits}
\def\det{\mathop{\mathrm{det}}\nolimits}
\def\diag{\mathop{\mathrm{diag}}\nolimits}
\def\diam{\mathop{\mathrm{diam}}\nolimits}



\newcommand\nh[2]{\widehat{#1}\vphantom{#1}^{(#2)}}
\def\dia{\diamond}

\def\Oplus{\bigoplus\nolimits}



\def\qqq{\qquad}
\def\qq{\quad}
\let\ge\geqslant
\let\le\leqslant
\let\geq\geqslant
\let\leq\leqslant
\newcommand{\ca}{\begin{cases}}
\newcommand{\ac}{\end{cases}}
\newcommand{\ma}{\begin{pmatrix}}
\newcommand{\am}{\end{pmatrix}}
\renewcommand{\[}{\begin{equation}}
\renewcommand{\]}{\end{equation}}
\def\eq{\begin{equation}}
\def\qe{\end{equation}}
\def\[{\begin{equation}}
\def\bu{\bullet}
\def\ced{\centerdot}
\def\tes{\textstyle}


\title[{Scattering for anisotropic potentials }]
{Scattering for anisotropic potentials }

\date{\today}
\author[Evgeny Korotyaev]{Evgeny Korotyaev}
\address{Academy for Advance interdisciplinary Studies, Northeast Normal University,
Changchun, China, \  \ korotyaev@gmail.com}

\subjclass{34A55, (34B24, 47E05)} \keywords{ scattering, anisotropic
potential, time-dependent potentials}

\begin{abstract}
We consider the scattering for the operator  $H=H_o+V$, where the
unperturbed operator  $H_o$ is not assumed to be elliptic  and the
potential $V$ is anisotropic. Under some conditions on $H_o$ and $V$
we show that the wave operators for $H_o, H$ exist and are complete,
$H$ has no singular continuous spectrum and the eigenvalues of $H$
can accumulate only to zero. For stronger conditions on $V$ the
operator $H$ has finite number of eigenvalues only. Moreover, these
results are applied to the invariance principle and for
time-dependent potentials.
\end{abstract}

\maketitle

\section {Introduction and main results}
\setcounter{equation}{0}

\subsection{Introduction}
We consider the scattering for an operator $H=H_o+V$ on $L^2(\R^d)$,
where $ H_o=P(-i\na)$ and $V(x), x\in\R^d$ is an anisotropic
potential satisfying \er{0.1}. We denote the coordinate variable by
$x$, and the conjugate momentum variable by $k$. Here $P(k)$ is a
real function of $k\in \R^d$ satisfying the condition

\no {\bf Condition P.} {\it The function $P(k), k\in\R^d$ is real
and for some $a_j>1$ and $0\le j_-\le j_+\le \n$ has the form}
\[
\lb{P1}
\begin{aligned}
& P(k)=\sum_1^\n p_j(k_j),
\\
& k=(k_j)_{j=1}^\n\in \os_1^\n\R^{d_j}=\R^d,
\end{aligned}
 \qqq
p_j(k_j)=\ca \ \ |k_j|^{a_j}, & j=1,2,..,j_- \\
\  \ |k_j|^{a_j}\sign k_j    &   j_-< j\le  j_+, \ d_j=1\\
  -|k_j|^{a_j}   & j_+<j\le \n\ac .
\]
There are a lot of results about the scattering under the condition
 $V(x)=O(|x|^{-q})$ as $|x|\to \iy$, where $q>1$ see e.g. \cite{RS79}. We call
 such potential isotropic. The completeness of the wave operators  for anisotropic cases was
proved by Deich-Korotyaev-Yafaev \cite{DKY77}, where the main
gradient is imbedding theorems of Korotyaev-Yafaev  \cite{KY77}. Our
goal is to study the scattering for $H_o, H$, singular-continuous
spectrum of $H$ and its eigenvalues. Roughly speaking we use "mixed
approach": the Enss method \cite{E78} for the first variable, Kato's
smooth method \cite{K66} to the second variable plus a priori
estimates. We shortly describe our main results:

\no $\bu$ the wave operators for $H_o, H$ exist and are complete,
 singular-continuous spectrum of $H$ is absent, its eigenvalues can
accumulate only to zero;

\no $\bu$ $H$ has finite number of eigenvalues, 
under stronger conditions on potentials $V$,

\no $\bu$ the invariance principle for anisotropic case,

\no $\bu$  scattering for time-dependent anisotropic potentials.

\subsection {Main results}

Consider the self-adjoint operators $H_o=P(-i\na), H=H_o+V$ acting
on the Hilbert  space $\mH=L^2(\R^d)$, where $P$ satisfies Condition
P and the potential  $V\in \dL_q\cup \cL_\ve$, defined below. Here
$P_{ac}(H_o)=\1$ is the identity operator. For the operators $H_o,
H$ and an open interval $\o\ss \R$ we introduce the wave operators
\[
\lb{1.1} W_\pm(H,H_o,\o)=s-\lim e^{itH}e^{-itH_o}E_o(\o)\qqq as \qqq
t\to \pm \iy,
\]
where $E_o(\cdot)$ is the spectral projector of $H_o$. If $
W_\pm(H,H_o,\o)\mH=E(\o)\mH_{ac}$, where $E(\cdot)$ is the spectral
projector of $H$, then  the wave operators are called  complete. For
the multi-index $\ve=(\ve_j)_1^\n, \ve_j\ge 0$ we define the
function $\vr^{\ve}(x)$ by
$$
\vr^{\ve}(x)=\prod_{j=1}^\n  \vr_j^{\ve_j}(x),\qq \vr_j=\la x_j\ra
:=(1+x_j^2)^{-{1\/2}}, \qqq x=(x_j)_{1}^\n\in \os_1^\n\R^{d_j}=\R^d.
$$
We introduce the space $\cL_\ve$ of anisotropic functions and the
space $\dL_q$ of isotropic functions.

\no {\bf Definition.} {\it i) Let $\cL_\ve $ denote the space of real
functions $f\in L^\iy(\R^d)$ such that
\[ 
\lb{0.1}  
\vr^{-\ve}f\in L^\iy(\R^d),\qqq
 \sup _{|x|=r}|\vr^{-\ve}(x)f(x)|\to 0\qqq {\as} \qq r\to \iy,
\]
where  the multi-index $\ve=(\ve_j)_1^\n $ belongs to $\E_\pm $ or
$\E_o$  given by
$$
\begin{aligned}
\tes \E_\pm=\bigcap_{ j\in \J_\pm} \E_j,\qqq\ \E_o=\{\ve\in
\ol\R_+^\n: \gr>1, \gr_j> {1\/2} \    {\rm for\ some} \ j\in \J \},
\qq\J=\J_\n=\{1,..,\n\},
\end{aligned}
$$
$$
\begin{aligned}
\tes \E_j=\Big\{\ve\in \ol\R_+^\n: \ve_j+\gr-\gr_j>1\Big\}\sm
\Big\{\ve\in \ol\R_+^\n: \ve_j\le {1\/2},\ \ \gr_m\le {1\/2}, \ {\rm
for\ some}\ m\in \J\sm\{j\}\Big\},
\end{aligned}
$$
$$
 j\in \J, \qq \tes \gr=\sum_1^\n\gr_j,\ \ \gr_j={1\/2a_j}\min
\{2\ve_j, d_j\},\qq \J_+=\{1,2,..,j_+\},\qq
\J_-=\{j_--1,j_-,..,\n\}.
$$
ii) If $\ve=q>1$, then we set $\dL_q=\cL_q$.
 Let $\cL_{\ve,q} $ denote the space of real functions
 $f=f_{\ve}+f_{q}$, where $(f_{\ve}, f_{q})\in \cL_{\ve}\ts\dL_{q}$.}

Here and below the multi-index $\ve=(\ve_j)_1^\n \in \ol\R_+^\n$. We
present the main results.

\begin{theorem}
\lb{T1.1}

Consider  $H=H_o+V$ on $L^2(\R^d), d\ge 2$, where the operator $H_o=P(-i\na)$.

\no i) Let the potential $V\in \cL_{\ve,q}, \ve \in
\E_\pm, q>1$ and the interval $\o=\R_\pm$. Then the wave operators $W_\pm
(\o)$ in \er{1.1} exist and are complete, $H$ has no singular
continuous spectrum, its eigenvalues have finite multiplicity and
can accumulate only to zero.

\no ii) If in addition $\ve\in \E_o$, then $H$ has a finite number
of eigenvalues, counted with multiplicity.

\end{theorem}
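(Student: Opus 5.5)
The plan is to follow the ``mixed approach'' announced in the introduction. Write $V=V_\ve+V_q$ with $(V_\ve,V_q)\in\cL_\ve\ts\dL_q$. Everything will be reduced to two inputs: local smoothness (in Kato's sense) of suitable weights with respect to $H_o$ on spectral subintervals of $\o$, and compactness of localized perturbations.

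\textbf{Step 1: smoothness for $H_o$.} Since $P(k)=\sum_j p_j(k_j)$, we have $e^{-itH_o}=\otimes_j e^{-itp_j(-i\na_j)}$. For each factor I would use the one-particle dispersive and resolvent estimates
\[
\|\vr_j^{\ve_j}e^{-itp_j(-i\na_j)}\vr_j^{\ve_j}\|\le C\la t\ra^{-2\gr_j},
\qqq \gr_j={1\/2a_j}\min\{2\ve_j,d_j\}.
\]
These follow from scaling and stationary phase for $|k|^{a}$, $|k|^a\sign k$ and $-|k|^a$, together with interpolation with the trivial bound. Multiplying the factors gives decay $\la t\ra^{-2\gr}$. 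The condition $\ve\in\E_j$, i.e.\ $\ve_j+\gr-\gr_j>1$, is exactly what is needed to integrate in time after trading the $j$-th factor for a limiting-absorption (Kato smoothness) bound. For the $j$-th variable I would use the Enss-type/Mourre estimate, namely $\vr_j^{s}$ with $s>1/2$ is $p_j(-i\na_j)$-smooth away from $k_j=0$. The excluded set in $\E_j$ (namely $\ve_j\le 1/2$ together with some $\gr_m\le1/2$) is precisely where neither route works. This should yield the statement that $|V_\ve|^{1/2}$ is $H_o$-smooth on every compact $\d\ss\o\sm\{0\}$. The sign restriction $\o=\R_\pm$ and the index sets $\J_\pm$ enter because on $\R_+$ only the directions that can carry positive energy, $j\in\J_+$, must supply the decay. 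For $V_q$ the isotropic bound $\la x\ra^{-q/2}$ is handled in the same way, using the radial (Mourre) estimate with conjugate operator $A=\sum_j\frac{1}{2a_j}(x_j\cdot k_j+k_j\cdot x_j)$, for which $i[H_o,A]=H_o$-type positivity holds away from energy $0$.

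\textbf{Step 2: smoothness for $H$.} I would write $G=|V|^{1/2}$ and $V=GUG$. Using the resolvent identity, $G(H-z)^{-1}G$ is expressed through $(I+UG(H_o-z)^{-1}G)^{-1}$. Compactness of $G(H_o-z)^{-1}G$, which comes from the decay at infinity in \er{0.1} and the local boundedness, together with its H\"older continuity up to the real axis (from Step 1), lets the analytic Fredholm theorem give the following. The exceptional set $\cN\ss\o\sm\{0\}$ is closed and of measure zero, can accumulate only at $0$ and $\pm\iy$, and consists of eigenvalues of finite multiplicity. Outside $\cN$ the operator $G$ is $H$-smooth. Kato's smooth-perturbation theory then gives existence and completeness of $W_\pm(\o)$ and the absence of singular continuous spectrum in $\o$. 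To exclude accumulation at infinity I would use a high-energy estimate: $\|G(H_o-\l-i0)^{-1}G\|\to0$ as $|\l|\to\iy$, since $a_j>1$ makes the symbol grow. Proving that $\cN$ consists only of eigenvalues, and that they have finite multiplicity, requires the embedded-eigenvalue argument: an eigenfunction gives a solution of $\f=-UG(H_o-\l-i0)^{-1}G\f$, and conversely. This is where I expect the main obstacle to lie. Because $P$ is non-elliptic and $\nabla P$ vanishes on the sets $\{k_j=0\}$, the uniform limiting absorption must be proved carefully near the degenerate threshold set, and matching the weights to the split between Enss and Kato is delicate.

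\textbf{Step 3: finiteness for $\ve\in\E_o$.} When $\gr>1$ the estimate of Step 1 is integrable at $t=0$ scale as well, so that $\int_0^\iy\|G e^{-itH_o}G\|\,dt<\iy$ uniformly near energy $0$. The condition $\gr_j>1/2$ for some $j$ gives smoothness across $\l=0$, and consequently $G(H_o-z)^{-1}G$ has a norm-continuous compact limit at $z=0$. Then a Birman--Schwinger count bounds the number of eigenvalues near $0$ by the number of eigenvalues of this limit operator exceeding $1$, which is finite. Combined with Step 2, this shows that the total number of eigenvalues is finite.
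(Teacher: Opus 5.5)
Your route (symmetric factorization $V=GUG$, limiting absorption, Fredholm theory, Birman--Schwinger) is genuinely different from the paper's Enss-type argument. As written it does not close, and the first break is already in Step~1. Since $G=|V_\ve|^{1/2}\le C\vr^{\ve/2}$, every exponent is halved. Your two ingredients then give only the dispersive bound $\|Ge^{-itH_o}G\|\le C(1+|t|)^{-\sum_j a_j^{-1}\min\{\ve_j,d_j\}}$, and $h_j$-smoothness of the $j$-th factor $\vr_j^{\ve_j/2}$ only when $\ve_j>1$. But $\E_j$ admits, e.g., $P(k)=|k|^2$, $\n=2$, $d_1=d_2=2$, $\ve=(0.8,0.8)$. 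There $\gr_1=\gr_2=0.4$ and $\ve_1+\gr_2=1.2>1$, yet $\ve_j/2=0.4<1/2$, and $\|Ge^{-itH_o}G\|$ decays only like $(1+|t|)^{-0.8}$. An energy cutoff $|k_1|\ge c$ does not improve the two-sided factor in $x_1$ beyond $(1+|t|)^{-0.4}$, since one weight acts harmlessly on the initial data. So local $H_o$-smoothness of $|V_\ve|^{1/2}$, and H\"older continuity of $G(H_o-\l-i0)^{-1}G$, do not follow from what you list. They are essentially the content of the Korotyaev--Yafaev anisotropic imbedding theorems behind Deich--Korotyaev--Yafaev, which you would have to invoke. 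The paper never takes a square root. Lemma~\ref{T3.2} bounds $\int_r^\iy\|Ve^{\mp itH_o}\z(h_j/s)Q_j^\pm f\|\,dt$ in $L^1(dt)$ with the full weight $\vr^\ve$ on one side. It combines the one-sided bound $\|\vr_j^{\ve_j}e^{\mp ith_j}\z(h_j/s)\cT_j^\pm\|\le C(1+|t|)^{-\ve_j}$ for the outgoing/incoming parts in the $j$-th variable with the weighted $L^2$-in-time estimate \er{5.1} in the others. This is exactly why $\ve_j+\gr-\gr_j>1$ suffices.

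Even granting Step~1, Step~2 leaves the decisive part open. On the real axis $\l\mapsto G(H_o-\l-i0)^{-1}G$ is at best H\"older continuous, not analytic, so the analytic Fredholm theorem says nothing there. You only get that the exceptional set $\cN$ is closed and of measure zero. Kato's theory then yields completeness and $\s_{sc}(H)\cap\o\ss\cN$, but a closed null set can support singular continuous spectrum. Hence absence of $\s_{sc}$, finite multiplicity and non-accumulation inside $\o$ all hinge on proving that $\cN$ is a discrete set of eigenvalues. That is the Agmon-type bootstrap near the degenerate sets $\{k_j=0\}$, which you flag as the main obstacle and do not supply. The paper sidesteps this entirely: Theorem~\ref{T2.1} uses compactness of $\vp(H)(W_\pm-\1)\z_jQ_j^\pm$ together with \er{2.1x}. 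This shows directly that $\mH_{ac}\om W_+\mH=\{0\}$ and $\dim E(\o')(\mH\om\mH_{ac})<\iy$ for compact $\o'\ss\o\sm\{0\}$, with no exceptional set to identify. Step~3 has its own gap: Birman--Schwinger counting applies to eigenvalues in gaps of $\s(H_o)$. Eigenvalues near $0$ are in general embedded: $\s(H_o)=\R$ whenever $P$ changes sign, and for $H_o\ge 0$ the positive eigenvalues are embedded. The paper instead uses $\ve\in\E_o$ to get the cutoff-free bound \er{3.6z}, hence compactness of $X^\pm=\pm i\int_0^\iy e^{\pm itH}\vp(H)Ve^{\mp itH_o}Q_m^\pm\,dt$. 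Then any infinite orthonormal family of eigenvectors $f_n$ in $(-1,1)$ gives $1=-\sum_\pm(f_n,X^\pm f_n)\to0$, a contradiction.
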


\no {\bf Remark.} The existence of the wave operators is proved by
the stationary phase method under the condition
$\ve_1+....+\ve_\n>1$, see e.g. \cite{RS79}.  The completeness of the
wave operators needs some additional assumptions \cite{DKY77}, see
Examples in Section 2.
Our conditons on potentials is stronger, in general, than the conditions from \cite{DKY77}, but for specific dimensions $d_j, j\in \N_\n$
they coincide. Here all potentials are bounded, however, the results can be
extended to the case of potentials with local singularities, see Examples in Section 2.

There are many examples which are not cover
 by the previous considerations. We formulate {\it invariance principle} for the following operators
 $$
   T=T_o+V,\qqq T_o=\gf(H_o),\qqq H_o=P(-i\nabla ),
 $$
 where the anisotropic potential $V\in \dL_q$  or  $V\in \cL_\ve$ and the
function $h$ satisfies Condition IP. We denote the range of $P(k),
k\in \R^d$ by $\G$, where $\G=[0,\iy)$ or $\G=\R$.

\no {\bf Condition IP} {\it 1) The real function $\gf\in
L_{loc}^{\iy}(\G)$ and $|\gf(\l)|\to \iy$ as $|\l|\to \iy, \l\in
\G$.

\no  2) Let $\o\ss\G$ be some interval and the mapping  $\gf:\o\to
\O=\gf(\o)$  be a  bijection,  $\gf\in C^m(\o), m>3$ and
$\gf'(\l)>c$ for any $\l\in \o$ and some $c>0$.}

Our second main  result is formulated for the operators
$T_0=\gf(H_o)$ and $T=T_0+V$.

\begin{theorem}
\lb{T1.2} ({\bf Invariance principle.}) Let a function $\gf$ satisfy
Condition  IP and let  a potential $V\in
\cL_{\ve,q}$, where $ \ve \in \E_\pm, q>1$ and an interval  $\o\ss \R_\pm$. Then

\no i) the wave operators $W_\pm (T,T_o,\O)$ exist and are complete,

\no ii) $\s_{sc}(T)\cap \O=\es$ and eigenvalues of $T$,
 belonging to $\O$  have finite multiplicities and can accumulate  only to the ends of the interval $\O$.

\end{theorem}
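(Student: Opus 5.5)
The proof will follow the route of Theorem \ref{T1.1}: a mixed Enss/Kato scheme, with $H_o$ replaced by $T_o=\gf(H_o)$ and all spectral localisation done inside $\O=\gf(\o)$. Since $\gf:\o\to\O$ is a $C^m$ bijection with $\gf'>c>0$, the functional calculus gives $E_{T_o}(\O)=E_o(\o)$. Moreover, for $g\in C_0^\iy(\O)$ we have $g(T_o)=(g\circ\gf)(H_o)$ with $g\circ\gf\in C_0^m(\o)$. Every localised quantity for $T_o$ is therefore a localised quantity for $H_o$ on $\o\ss\R_\pm$, where the estimates from the proof of Theorem \ref{T1.1} are available. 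Condition IP 1) makes $T_o$ (and hence $T$, since $V$ is bounded) self-adjoint with $(T_o-z)^{-1}$ a function of $H_o$ decaying at infinity. This is enough for the compactness statements below.

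The key steps are as follows.

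First, split $V=V_\ve+V_q$ with $(V_\ve,V_q)\in\cL_\ve\ts\dL_q$ as in the definition of $\cL_{\ve,q}$. Recall the smoothness estimates used for $H_o$: the operators $\vr^{\ve/2}E_o(\o_1)$ and the anisotropic weights are $H_o$-smooth on compact $\o_1\Subset\o$. Here the Kato smooth method acts in the variables where the weight gives $\gr_j>1/2$.

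Second, transfer these bounds to $T_o$. For a compact $\O_1\Subset\O$, write $e^{-itT_o}E_{T_o}(\O_1)=e^{-it\gf(H_o)}E_o(\o_1)$ and use the spectral representation of $H_o$. Since $\gf'\ge c$, the change of variables $\m=\gf(\l)$ gives
\[
\sup_{\m\in\O_1}\|A\,\d_{T_o}(\m)A^*\|\le c^{-1}\sup_{\l\in\o_1}\|A\,\d_{H_o}(\l)A^*\|.
\]
Hence $T_o$-smoothness on $\O_1$ follows from $H_o$-smoothness on $\o_1$.

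Third, obtain the Enss-type propagation (minimal velocity) estimates for the remaining variables. For these I will use the stationary phase method on $e^{-it\gf(P(k))}$ with symbol localised in $k$ on $P^{-1}(\o_1)$. The group velocity is $\gf'(P(k))\na P(k)$, and since $\gf'\ge c$ it is a nonzero multiple of $\na P$. The non-stationary phase bounds from the $H_o$ case, with outgoing/incoming projections, therefore survive. We need $m>3$ derivatives of $\gf$ to integrate by parts enough times to get integrable decay $O(\la t\ra^{-1-\d})$ for $\|F(|x|>\d|t|)V\ldots\|$, matching the $\ve\in\E_\pm$ and $q>1$ decay.

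Fourth, conclude by the abstract Enss/mixed theorem, as in the proof of Theorem \ref{T1.1}. Compactness of $V(T-i)^{-1}E_T(\O_1)$ follows from the decay \er{0.1} together with $|\gf|\to\iy$. With the propagation estimates, this gives existence and completeness of $W_\pm(T,T_o,\O)$, and $\s_{sc}(T)\cap\O_1=\es$ for every $\O_1\Subset\O$. Finite multiplicity and the absence of accumulation inside $\O_1$ come from the standard argument that eigenvectors in $E_T(\O_1)$ form a set precompact by the compactness/smoothness above. Exhausting $\O$ by such $\O_1$ gives accumulation only at the ends of $\O$.

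The main obstacle is the third step: the uniform stationary phase bounds for $e^{-it\gf(P(k))}$. $P$ is non-elliptic and anisotropic, with $|k_j|^{a_j}$ pieces of different homogeneity, so the phase $\gf\circ P$ is no longer quasi-homogeneous. I would first try to avoid redoing the analysis by the change of variables in the second step at the level of the resolvent: prove a limiting absorption principle for $H_o$ with the weights $\vr^{\ve/2}$ on $\o$, then transfer it to $T_o$. Only if the Enss part genuinely needs time-decay would I redo the stationary phase argument, using the composition $\gf\circ P$ with $\gf\in C^m$.
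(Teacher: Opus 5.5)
Your plan breaks at the third step, which you yourself flag as the main obstacle; the failure is structural, not only technical.

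\textbf{Where the direct approach fails.} Condition 3 for the pair $(T_o,T)$ needs $\int_r^\iy\|Ve^{\mp itT_o}\z_jQ_j^\pm f\|dt\le C_r\|f\|$, uniformly in $f$. An isotropic Enss bound such as $\|V\c(|x|>\d |t|)\|=O(|t|^{-1-\d})$ is not available for $V\in\cL_\ve$, $\ve\in\E_\pm$. Along a coordinate axis, $\sup_{|x|>R}|V|$ can be of order $R^{-\min_j\ve_j}$, and $\min_j\ve_j<1$ is allowed (Example 1B, $\ve_j\in(\frac12,1)$). This is why the paper's proof for $H_o$ is mixed. In Lemma \ref{T3.2} it uses the pointwise-in-$t$ bound
$$\|\vr^{\ve}e^{-itH_o}\z(h_j/s)\cT_j^+f\|\le\|\vr_j^{\ve_j}e^{-ith_j}\z(h_j/s)\cT_j^+\|\,\|\vr_{(j)}^{\ve}e^{-itH_o}f\|,$$
then Lemma \ref{T2.3} in $x_j$ and Lemma \ref{T5.1} in the other variables. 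That splitting needs $e^{-itH_o}=e^{-ith_j}e^{-it(H_o-h_j)}$, with the factors acting in separate variables, and it needs $\cT_j^\pm$ adapted to $h_j$.

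For nonlinear $\gf$, $e^{-it\gf(H_o)}$ has no such factorization. Your change of spectral variable $\m=\gf(\l)$ only transfers estimates involving $H_o$ as a whole, such as full Kato smoothness. It does not transfer the per-variable estimates \er{3.2}--\er{3.4} or the partial-weight estimates \er{5.1}, \er{5.2}. In addition, $H_o$-smoothness of $\vr^{\ve/2}$ is nowhere proved or used in the paper. Your fallback, transferring a LAP from $H_o$ to $T_o$, would be a separate smooth-method proof. Its inputs are not supplied: a LAP for $H_o$ with these anisotropic weights, and one for the perturbed $T$.

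\textbf{The missing idea.} Apply the invariance principle in the opposite direction, so that the free dynamics stays $e^{-itH_o}$. Set $A=\f(T)$ and $A_o=\f(T_o)=H_oE(H_o,\o)$. Then Condition 2 and Lemma \ref{T3.2} hold for $A_o$ verbatim. The cost is that $A-A_o$ is no longer a multiplication operator. The paper handles it by a Birman--Solomyak double operator integral:
$$u(T)\big(v(T)-v(T_o)\big)u(T_o)=\iint w\,dE_\m\,V\,dE^o_\l=\tfrac1{2\pi}\iint\wh g(\t)e^{i\t_2A}Ve^{i\t_1A_o}d\t.$$
Here $g$ is the divided difference written in the spectral variables of $A$ and $A_o$. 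The hypothesis $\gf\in C^m$, $m>3$, is used here, not in stationary phase: it gives enough decay of $\wh g$. Lemma \ref{T4.1} then converts the known bound on $\int\|Ve^{-itH_o}\z_jQ_j^\pm f\|dt$ into Condition 3 for $(A,A_o)$.

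Theorem \ref{T2.1} now applies to $(A,A_o)$. Lemma \ref{T4.3} identifies $W_\pm(T,T_o,\O)=W_\pm(A,A_o,\o)$; the only estimate on $e^{-itT_o}$ it needs is the dense-set, Cook-type bound \er{4.14}, which stationary phase does give. The spectral statements for $T$ on $\O$ follow from those for $A$ on $\o$, since $\f:\O\to\o$ is a homeomorphism.

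Separately, your identity $E_{T_o}(\O)=E_o(\o)$ requires $\gf^{-1}(\O)\cap\G=\o$ up to a null set. Condition IP does not state this, and the paper assumes it tacitly as well when it writes $A_o=H_oE(H_o,\o)$.
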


\subsection {Time-dependent potentials}
Now we consider the scattering for the time-dependent Schr\"odinger
equation with the Hamiltonian $H(t)=H_o+V_t$ on $L^2(\R^d)$ given by
$$
\lb{se1} \tes
 i{\pa u(t)\/\pa t}=H(t)u(t),\qqq u(0)=u_o\in L^2(\R^d),
$$
where $H_o=P(-i\na)$ is the unperturbed operator  as before and the
potential $V_t(x)$ is time-dependent, anisotropic. We assume that
the potential $V_t(x), t\in \R$ satisfies

\no {\bf Condition VT.} {\it The function $V_t(x)\in L^\iy(\R^d\ts
\R) $ is real, the mapping $t\to V_t$ is the continuously
differentiable $L^\iy(\R^d)$--valued function.}

In order to discuss scattering we need to introduce the propagator
for our Hamiltonian.  A {\it propagator} is a two-parameter family
of unitary operators $U(t,s)$, $t, s \in \R$, acting on $\mH$ and
satisfying the following conditions:

{\it
$\bu$ $U(t,s)U(s,r)=U(t,r)$ for all $t,s,r\in \R$,

$\bu$ $U(t,t)=\1$ for all $t\in \R$, where $\1$ is the identity
operator in $\mH$,

$\bu$ $U(t,s)$ is strongly continuous in $t,s\in \R$.}

Our Hamiltonian $H(t)=H_o+V_t, t\in \T$ is a family of
time-dependent operators on $\mH$ and $V_t$ is a bounded operator.
Suppose a propagator $U(t,s)$ leaves $\mD=\mD(H_o)$ invariant and
that
\[
\lb{DD2} \tes   \frac{d}{dt}U(t,s)f=-iH(t)U(t,s)f,\qqq f\in \mD,
\]
 in the strong sense. Then  the family $U(t,s)$ is called the {\it propagator for  $H(t)$}.   In this situation we have
\[
\lb{DD1}  \tes   \frac{d}{ds}U(t,s)f=iU(t,s)H(s)f,\qqq f\in \mD,
\]
where the derivative has the strong sense. Due to well known results
about time-dependent Hamiltonians (see e.g. Theorems X.70, X.71 in
\cite{RS75}) for our Hamiltonian  $H(t)$ under the Condition VT
there exists the propagator $U(t,s)$. The propagator $U_o(t,s)$ for
$H_o$  has the form $U_o(t,s)=e^{-i(t-s)H_o}$.

\begin{theorem}
\lb{Tt2}

Let $H(t)=H_o+V(t)$, where $H_o=P(-i\na)$ and a potential  $V_t$
satisfy Conditions  VT and the following
\[
\lb{tt1}
|V_t(\cdot)|\le g(t)\vr^\ve, \qq  {\rm where} \qq 
\ve \in \E_o,\qqq
\la t\ra^{-\g} g\in L^2(\R),
\]
for some $\g>0$ such that $\g+\gr>{1\/2}$. Then there exist the unitary wave
operators given by
\[
\lb{tt2} \cW_\pm=s-\lim U(0,t)e^{-itH_o}\qqq as \qqq t\to \pm \iy.
\]
\end{theorem}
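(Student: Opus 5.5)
The plan is Cook's method combined with Kato smoothing for the free operator $H_o$. By the definition of the propagator, \er{DD1} gives for $f\in\mD$
$$
\tes \frac{d}{dt}\,U(0,t)e^{-itH_o}f=iU(0,t)\,V_t\, e^{-itH_o}f .
$$
Hence, for $t_2>t_1$,
$$
\tes \|U(0,t_2)e^{-it_2H_o}f-U(0,t_1)e^{-it_1H_o}f\|\le \int_{t_1}^{t_2}g(t)\,\|\vr^\ve e^{-itH_o}f\|\,dt .
$$
Split $g(t)=\la t\ra^{-\g}g(t)\cdot \la t\ra^{\g}$ and apply Cauchy--Schwarz. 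Since $\la t\ra^{-\g}g\in L^2(\R)$, it suffices to prove the weighted smoothing estimate
$$
\tes \int_\R \la t\ra^{2\g}\|\vr^\ve e^{-itH_o}f\|^2dt\le C\|f\|^2,
$$
or at least finiteness of the left side for $f$ in a dense set. Existence of the limits \er{tt2} then follows for $\cW_\pm$, and also for the inverse family $e^{itH_o}U(t,0)$, by the same computation using \er{DD2}. Since both families of isometries converge strongly, the limits are mutually inverse, so $\cW_\pm$ are unitary.

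The core step is the estimate for $\|\vr^\ve e^{-itH_o}f\|$. Because $P(k)=\sum p_j(k_j)$, the group factorizes as $e^{-itH_o}=\otimes_j e^{-itp_j(-i\na_j)}$, and $\vr^\ve=\prod_j\vr_j^{\ve_j}$ factorizes too. For $f=\otimes f_j$ with $\hat f_j\in C_0^\iy$ supported away from the critical set $k_j=0$, I will bound each factor by a dispersive estimate
$$
\tes \|\vr_j^{\ve_j}e^{-itp_j}f_j\|\le C_j\la t\ra^{-\gr_j}.
$$
This follows from stationary phase for the homogeneous symbol $|k_j|^{a_j}$ (or its signed and negative variants). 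The kernel decays like $|t|^{-d_j/(2a_j)}$ in $L^\iy$ on spatial scales $|x_j|\lesssim |t|^{1/a_j}\cdot$const, and the weight $\vr_j^{\ve_j}$ localizes to $|x_j|\lesssim1$. Interpolating between the $L^1\to L^\iy$ bound and the trivial $L^2$ bound gives decay $\la t\ra^{-\min\{2\ve_j,d_j\}/(2a_j)}=\la t\ra^{-\gr_j}$. I will use the explicit exponent $1/(2a_j)$ in $\gr_j$ as a guide, and where I am unsure I will fall back on the Kato-smoothness bounds for the individual $p_j(-i\na_j)$ from \cite{KY77}. Multiplying the factors gives $\|\vr^\ve e^{-itH_o}f\|\le C\la t\ra^{-\gr}$ for a dense set of $f$.

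With this decay, $\int\la t\ra^{2\g}\la t\ra^{-2\gr}dt$ converges only if $\g<\gr-1/2$, which is not the same as the hypothesis $\g+\gr>1/2$. So I will instead use the Cauchy--Schwarz split differently: bound $\int g(t)\la t\ra^{-\gr}dt\le \|\la t\ra^{-\g}g\|_{L^2}\,\|\la t\ra^{\g-\gr}\|_{L^2}$. This is finite when $\gr-\g>1/2$, and I may need to reread the hypothesis with $\g$ having the opposite sign convention. Alternatively, when $\gr>1$, which holds since $\ve\in\E_o$, I can use $L^1$-integrability of $\la t\ra^{-\gr}$ directly together with boundedness of $g$ from Condition VT. Reconciling exactly with the stated relation between $\g$ and $\gr$ is the step I expect to be the main obstacle. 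I would first try to handle the range $\g\le 0$ via the smoothing estimate with weight $\la t\ra^{2\g}$, using $\ve\in\E_o$ (so $\gr_j>1/2$ for some $j$) to make one factor genuinely $L^2$-smoothing in time. That is, I would use $\int\|\vr_j^{\ve_j}e^{-itp_j}f_j\|^2dt\le C\|f_j\|^2$ in the spirit of Kato \cite{K66}, and let the remaining factors contribute the extra $\la t\ra^{-(\gr-\gr_j)}$ decay. Density of the tensor-product functions with compactly supported Fourier transforms away from $0$ then extends existence to all of $L^2$, and unitarity follows as above.
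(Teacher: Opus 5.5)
The gap is in the unitarity step. Your core estimate is a pointwise bound $\|\vr^\ve e^{-itH_o}f\|\le C_f\la t\ra^{-\gr}$ on a dense set of tensor products, and the constant depends on $f$. Once the weights are read correctly (see below), this gives existence of $\cW_\pm$ as isometries by density. It says nothing about their range.

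Your claim that the inverse family $e^{itH_o}U(t,0)$ converges ``by the same computation using \er{DD2}'' is false. We have $\frac{d}{dt}e^{itH_o}U(t,0)f=-ie^{itH_o}V_tU(t,0)f$, so you would need control of $\|V_tU(t,0)f\|$, i.e. of the perturbed evolution, and you have no estimate for it. A pointwise-in-$t$ bound also cannot be made uniform, since $\|\vr^\ve e^{-itH_o}\|=1$ for every $t$.

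The missing ingredient is a time-integrated estimate that is uniform on all of $L^2$. This is \er{5.1}, $\int_\R\la t\ra^{2\g}\|\vr^\ve e^{-itH_o}\psi\|^2dt\le C\|\psi\|^2$ for every $\psi$, which is Lemma \ref{T5.1}(i) in the paper. With it, Cook's method plus Cauchy--Schwarz give $\|U(0,t')e^{-it'H_o}-U(0,t)e^{-itH_o}\|^2\le C\int_t^{t'}\la s\ra^{-2\g}g^2(s)\,ds$. So the family is Cauchy in operator norm, and a norm limit of unitaries is unitary; this is exactly the paper's argument. Equivalently, the uniform bound makes the duality estimate $|(e^{itH_o}U(t,0)f-e^{it'H_o}U(t',0)f,\phi)|\le\|f\|\int_t^{t'}\|V_se^{-isH_o}\phi\|ds$ uniform over $\|\phi\|\le 1$.

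The ``main obstacle'' you identify is only notational. In this paper $\la x\ra=(1+x^2)^{-1/2}$ is a \emph{decaying} weight; see the definition of $\vr_j$ and Lemma \ref{T2.3}. So the hypothesis says $(1+|t|)^{\g}g\in L^2$, the weight in \er{5.1} is $(1+|t|)^{-2\g}$, and $\g+\gr>1/2$ is precisely integrability of $(1+|t|)^{-2(\g+\gr)}$. Your guess to flip the sign is right. The appeal to boundedness of $g$ via Condition VT is unjustified, because VT bounds $V_t$, not the majorant $g$.

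Your fallback idea can actually supply the uniform estimate, provided you run it for arbitrary $f$ rather than tensor products:
\begin{itemize}
\item $\ve\in\E_o$ gives some $m$ with $\min\{2\ve_m,d_m\}>a_m$. This ensures $\sup_\l\|\vr_m^{\ve_m}\d(h_m-\l)\vr_m^{\ve_m}\|<\infty$, so $\vr_m^{\ve_m}$ is $h_m$-smooth in Kato's sense.
\item Bound the other factors of $\vr^\ve$ by $1$. The remaining part of $e^{-itH_o}$ is unitary and commutes with $\vr_m^{\ve_m}$.
\item Integrate over the other variables (Fubini) to get $\int_\R\|\vr^\ve e^{-itH_o}f\|^2dt\le C\|f\|^2$ for all $f$.
\end{itemize}
Together with $g\in L^2$, which follows from the correctly read hypothesis since $\g>0$, this already gives norm convergence. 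As written, however, you multiply the smoothing factor by $f$-dependent decay of the other factors. That bound is again non-uniform, so the unitarity claim remains unproved.
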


Now we discuss the scattering for time periodic Hamiltonian
$H(t)=H_o+V_t, t\in \T=\R/\Z$, where $H_o=P(-i\na)$ is the
 operator as before. We assume that the potential
$V_t(x), t\in \T=\R/\Z$   is 1-periodic in time, i.e., $V_{t + 1} =
V_t$ for any $t\in \R$  and    satisfies Condition VT. Here we
introduce the additional periodicity condition for the propagators:

$\bu$  {\it $U(t,s)$ is $1$--periodic, i.e. $U(t+1,s+1)=U(t,s)$ for
all $ t,s\in \R$.}

\no It is well known that for our Hamiltonian  $H(t)$ there exists
the propagator $U(t,s)$, which  is 1-periodic, i.e. it obeys
$U(t+1,s+1)=U(t,s)$ for all $ t,s\in \R$, see \cite{H79},
\cite{Y77}.
 Introduce the monodromy operators $M=U(1,0)$ and $M_o=e^{-iH_o}$
for $H(t), H_o$ respectively. We present our main results about time
periodic scattering.

\begin{theorem}
\lb{Tt1}

Let $H(t)=H_o+V(t)$, where $V_t(\cdot)$ is 1-periodic in time and
obeys Condition   VT  and the following
$$
|V_t(\cdot)|\le F, \qq  {\rm where}  \qq F\in \cL_{\ve,q},
$$
and  $ \ve \in \E_-\cap
\E_+ , q>1$. Then the wave operators
\[
\lb{t2} \cW_\pm=s-\lim U(0,t)e^{-itH_o}\qqq as \qqq t\to \pm \iy,
\]
exist and are complete, i.e., $\cW_\pm\mH=\cH_{ac}(M)$,
$\s_{sc}(M)=\es$ and eigenvalues of $M$
 have finite multiplicity and can accumulate only  to 1.
If   in addition, $F\in \cL_\ve, \ve\in \E_o $, then $M$ has a
finite number of eigenvalues, counted with multiplicity.

\end{theorem}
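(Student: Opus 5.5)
The plan is to reduce the time-periodic problem to a stationary scattering problem for a pair of unitary operators, the monodromy operators $M=U(1,0)$ and $M_o=e^{-iH_o}$, and then run the same mixed Enss/Kato-smooth argument as in Theorem \ref{T1.1}, now in the unitary (discrete-time) setting.

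\emph{Step 1: reduction to discrete time.} By the periodicity $U(t+1,s+1)=U(t,s)$, for $t=n+\tau$ with $n\in\Z$ and $\tau\in[0,1)$ we have
\[
U(0,t)e^{-itH_o}=M^{-n}\,U(0,\tau)e^{-i\tau H_o}M_o^{n}.
\]
The family $U(0,\tau)e^{-i\tau H_o}-\1$, $\tau\in[0,1]$, is compact-valued on suitable localized states and is uniformly small on states escaping to infinity; this uses Duhamel's formula together with $|V_t|\le F$ and $F\to0$ along the directions controlled by \er{0.1}. Hence the limits \er{t2} exist and are complete if and only if the discrete wave operators $s$-$\lim_{n\to\pm\iy}M^{-n}M_o^{n}$ exist and are complete. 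In particular $\cW_\pm\mH=\cH_{ac}(M)$ reduces to the corresponding statement for $(M,M_o)$.

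\emph{Step 2: the perturbation structure.} By Duhamel,
\[
M-M_o=-i\int_0^1U(1,s)V_se^{-isH_o}\,ds .
\]
Factor $V_s=F^{1/2}\,\big(V_sF^{-1}\big)\,F^{1/2}$, where the middle factor is bounded by $1$ in absolute value. This writes $M-M_o$ as a finite-time average of products $A_s^*B_s$, in which both $A_s$ and $B_s$ carry one factor $F^{1/2}$. Split $F=F_\ve+F_q$. For the isotropic part $F_q$, $q>1$, I would use standard Mourre/Kato-smooth arguments for $H_o$. For the anisotropic part, separate the variables $x_j$. The direction $j$ in which $\ve\in\E_j$ is realized is treated by Enss-type phase-space localization with the propagation estimates for $e^{-itp_j(-i\na_j)}$. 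The remaining variables are treated by the Kato smoothness of $\vr_m^{\ve_m}$ with respect to $p_m(-i\na_m)$, which is exactly what the exponents $\gr_m=\frac1{2a_m}\min\{2\ve_m,d_m\}$ encode. Since $\ve\in\E_+\cap\E_-$, this works on both halves of the spectrum of $H_o$, i.e.\ for every quasienergy on the circle. That is needed here because $e^{-iH_o}$ folds $\R_+$ and $\R_-$ together.

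\emph{Step 3: the unitary Enss argument.} For $f\in\cH_c(M)$, consider the Ces\`aro averages of $M^nf$. Show that $\chi_R M^n f\to0$ in mean, using RAGE for unitary operators. The main obstacle is then proving
\[
\sum_{n\ge0}\big\|(M-M_o)M_o^{n}g_{\rm out}\big\|<\iy
\]
for outgoing-localized $g_{\rm out}$, and the analogous bound for incoming states. Here the time-dependence enters only through the bounded factor $U(1,s)$, so the estimates reduce to integrals over $t\in[n,n+1]$ of $\|F^{1/2}e^{-itH_o}g\|$. These are handled as in Theorem \ref{T1.1}, with the Kato-smooth estimate providing square-summability and the Enss estimate providing decay in the propagating direction. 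This yields asymptotic completeness, and it also gives that $M-M_o$ is relatively compact on spectral subsets away from $1$, which implies $\s_{sc}(M)=\es$. Eigenvectors of $M$ with eigenvalues bounded away from $1$ are localized, so a compactness argument gives finite multiplicity and shows they can only accumulate at $1$. The point $1$ is special because $e^{-i\l}=1$ at the threshold $\l=0$ of $P$, and also at $\l\in2\pi\Z$. I expect the care needed at $2\pi\Z\setminus\{0\}$, namely that these are not thresholds of $P$ since $P$ has its only critical value at $0$, to be a delicate point. I would handle it via the homogeneity of the $p_j$.

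\emph{Step 4: finiteness of eigenvalues.} For $F\in\cL_\ve$ with $\ve\in\E_o$, the condition $\gr>1$ together with $\gr_j>\frac12$ for some $j$ gives global (uniform in energy) Kato smoothness of $\vr^{\ve/2}$ with respect to $H_o$, including near the threshold $0$. Consequently the Birman--Schwinger-type operator for $M$ near the quasienergy $1$ is compact and norm-continuous up to the point $1$. Its eigenvalue-$1$ count is then finite, and this bounds the number of eigenvalues of $M$ in a neighbourhood of $1$. Combined with Step 3, this shows that $M$ has finitely many eigenvalues, counted with multiplicity.
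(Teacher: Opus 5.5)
Your overall architecture matches the paper's. You reduce to the monodromy pair via $U(0,t)e^{-itH_o}=M^{-n}U(0,\tau)e^{-i\tau H_o}M_o^{n}$, use Duhamel, and run a discrete-time Enss argument with the mixed Enss/Kato-smooth bounds for $\varepsilon\in\E_+\cap\E_-$.

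The genuine gap is in Step~3, which ignores the one new difficulty of the periodic problem: high energies. A spectral arc of $M_o=e^{-iH_o}$ pulls back under $\lambda\mapsto e^{-i\lambda}$ to infinitely many bands of $H_o$ running off to $\pm\infty$, and $H_o$-energy is not conserved along $M^nf$. Consequently neither $\chi_R$ nor $\chi_R\vartheta(M_o)$ is compact. For example, boosted packets $e^{ik_n\cdot x}g$ with $|k_n|\to\infty$ keep $\|\chi_R\vartheta(M_o)e^{ik_n\cdot x}g\|$ bounded below when $0\le\vartheta\not\equiv0$. So RAGE does not give $\chi_RM^nf\to0$ in mean. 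The compact operators actually at hand, such as \er{tc1}, all carry a cutoff $\varphi(H_o)$ with $\varphi\in C_0^\infty$. For the same reason, your ``compactness argument'' for eigenvectors with eigenvalues away from $1$ has nothing to act on.

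What is missing is a high-energy estimate in which norm smallness replaces compactness. The paper obtains it from the scaling bound \er{3.4}. That bound puts a negative power of $s$ into \er{3.6}, and hence $\|(\cW_\pm-\1)\zeta(h_j/s)Q_j^\pm\|\to0$ as $s\to\infty$. The paper then works as follows. It peels off $\zeta(h_1/s),\dots,\zeta(h_\nu/s)$ one coordinate at a time in the Enss decomposition of $(f_n,\cdot\,f_n)$, with $f_n=M^{n_p}f$. It removes the threshold by taking $E(\gu,M)f=0$ on an arc $\gu\ni1$ together with \er{tc1}. Only on the remaining bounded-energy part does it use compactness and the subsequence \er{mn}. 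Your focus on $2\pi\Z\setminus\{0\}$ is a red herring: those are ordinary interior energies, and $1$ is special only because of $\lambda=0$. Homogeneity of the $p_j$ is needed for uniformity as $\lambda\to\pm\infty$, not at the points $2\pi n$.

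Step~4 is also unsupported. Kato smoothness of $\varrho^{\varepsilon/2}$ controls only the imaginary part of the sandwiched resolvent, so it gives no norm continuity up to the real axis. Moreover, any Birman--Schwinger operator for $M$ near $1$ (e.g.\ through the Floquet operator $-i\partial_t+H_o+V_t$) involves all energies $2\pi n$, $n\in\Z$, at once. It therefore again needs decay as $|n|\to\infty$. The finiteness mechanism the paper has is that of Theorem~\ref{T2.1}(ii). For $\varepsilon\in\E_o$, a single direction $m$ with $\gr_m>\frac12$ gives the time-decay estimate \er{3.6z} with no energy cutoff. In the stationary case this makes $X^\pm$ in \er{e2m} compact and rules out an infinite orthonormal family of eigenvectors.

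Two smaller slips:
\begin{enumerate}
\item In Step~3 the bound is $\|(M-M_o)M_o^ng\|\le\int_n^{n+1}\|Fe^{-itH_o}g\|\,dt$, with $F$ rather than $F^{1/2}$. With $F^{1/2}$ you would need $\varepsilon/2\in\E_\pm$.
\item Relative compactness alone does not exclude singular continuous spectrum. That conclusion comes from running the Enss argument on all of $\cH_c(M)$.
\end{enumerate}
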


\subsection{Related works}
 The case of isotropic
potentials and simply characteristic polynomials $P$ was studied by
Agmon and Hormander \cite{AH76}. They show that there  exist the
complete wave operators, singular continuous spectrum is absent and
the point spectrum is investigated. In our anisotropic case the
completeness of the wave operators was proved by Deich, Korotyaev
and Yafaev \cite{DKY77}, where the smooth technique of Kato
\cite{K66}, Lavine \cite{L72} and the imbedding theorems of
Korotyaev and Yafaev from \cite{KY77} were applied.  Later the
scattering for the isotropic potentials and simply characteristic
polynomials $P$ was studied by Muthuramalingam \cite{M85},
\cite{Mu85} and Pascu \cite{Pa91} by the Enss method. Scattering for
Stark operators perturbed by anisotropic potentials was discussed by
Korotyaev and Pushnitski \cite{KP95}. Note that there exists a book
of Perry \cite{P83}, where the Enss method are applied to various
operators.

There are many results  devoted to scattering  mainly for self-adjoint
time-dependent Hamiltonians $H(t)=-\D + V_t(x)$,
 on $\R^d, d\ge 1$. Mainly articles are devoted
 for time-periodic Hamiltonians and to the spectral analysis of the
corresponding monodromy operator. Completeness of the wave operators
for $H_o, H(t)$ was established by Yajima \cite{Y77}. Later on it
was shown that the monodromy operator does not have singular
continuous spectrum, see \cite{H79}, \cite{K80}, \cite{K84} and it
has finite number of eigenvalues \cite{K84}. The case of Schrodinger
operators with time-periodic electric and homogeneous magnetic field
was discussed in \cite{K80}, \cite{K85}, \cite{Y82}, see also recent
papers \cite{AK19,AK16,AK10,Ka19,M00, Y98}. Moreover, scattering for
three body systems was considered in \cite{K85,MS04,Na86}. There are
o lot of results about scattering for discrete Schr\"odinger
operators, see \cite{BS99}, \cite{IK12}. There are only few results
about time periodic Hamiltonians on the graphs
\cite{IK25,K21,K24,K25}.

\section {Preliminaries }
\setcounter{equation}{0}

 Recall
results from \cite{DKY77}, devoted to the more
general case. Below we assume that $\ve\in
\ol\R_+^\n$.

\begin{proposition}
\lb{TKDY}

i) Let $H=H_o+V$, where $H_o=-\D$ and $\o=\R_+$. Let the potential
$V$ belongs to $\dL_q, q>1$ or $\cL_\ve, \ve\in \K_+ $, where $\ve
\in \K_+$ is given by
\[
\lb{1.3}  \K_+=\bigcap_{j\in\J} \K_j,\qq \K_j=\{\ve\in \ol\R_+^\n:
\ve_j+\wt\gr-\wt\gr_j>1\},\qq \wt\gr_j={1\/2}\min\{\ve_j, d_j\}, \ \
\wt\gr=\sum_{i\in\J}\wt\gr_i.
\]
 Then the wave operators
$W_\pm (H,H_o,\o)$ in \er{1.1} exist and are complete.

\no ii) Let $H=H_o+V$, where $H_o=-\D_1+\D_2$ and $\o=\R_+$. Let the
potential $V$ belongs to $\dL_q, q>1$ or $\cL_\ve, \ve\in \K_+ $,
where $\ve \in \K_+$ given by
\[
\lb{1.3x} \tes  \K_+=\{\ve\in\ol\R_+^2:\ve_1+\wt\gr_2>1\}.
\]
 Then the wave operators
$W_\pm (H,H_o,\o)$ in \er{1.1} exist and are complete.

\end{proposition}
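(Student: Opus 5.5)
This proposition restates results of \cite{DKY77}, so the plan is to reproduce the Kato--Lavine smooth-perturbation argument that underlies them. The key tool is the anisotropic imbedding (trace) theorems of \cite{KY77}.

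\emph{Setup.} Factor $V=|V|^{1/2}\,\sign V\,|V|^{1/2}$ and write $G=|V|^{1/2}$. For $V\in\cL_\ve$ we have $G\le C\vr^{\ve/2}$, with $\vr^{-\ve/2}G\to0$ at infinity. By the abstract Kato--Birman--Kuroda smooth theory, existence and completeness of $W_\pm(H,H_o,\o)$ on $\o=\R_+$ follow from two facts. First, $G$ must be $H_o$-smooth on every compact $\Delta\Subset\R_+$. Second, $G$ must be locally $H$-smooth there off a closed null set. The second fact is obtained through the resolvent identity $G R(z)G^*=Q(z)\bigl(1+\sign V\, Q(z)\bigr)^{-1}$, where $Q(z)=G R_o(z)G$. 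This requires $Q(z)$ to be compact, norm H\"older continuous up to the real axis, and $1+\sign V\,Q(\l\pm i0)$ to be invertible except on a closed null set (analytic Fredholm theorem). The isotropic part $\dL_q$, $q>1$, is classical (Agmon), and the sum $\cL_{\ve}+\dL_q$ is handled by the same factorization with $G=|V|^{1/2}$ bounded by the sum of the two weights.

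\emph{Smoothness of $G$ for $H_o$.} Pass to the momentum representation. For $\l>0$, the spectral density $\frac{d}{d\l}(E_o(\l)Gf,Gf)$ is an integral over the level surface $\{P(k)=\l\}$. Since $P(k)=\sum p_j(k_j)$ splits, this surface is fibered, and one can integrate first in the variable $k_j$ for which $\partial_{k_j}P\neq0$. The trace theorem of \cite{KY77} then bounds the trace of $\hat{f}$ on the surface in terms of $\|\vr^{-\ve/2}f\|$. In the direction $j$ one needs a trace on a sphere in $\R^{d_j}$, which costs $\ve_j>1$, or rather $\ve_j$ plus the gain in the other directions. In each transversal direction $i\neq j$, the weight $\vr_i^{\ve_i/2}$ gives an $L^2$-smoothing in $k_i$ of order $\min\{\ve_i,d_i\}/2=\wt\gr_i$. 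Summing these contributions gives exactly $\ve_j+\wt\gr-\wt\gr_j>1$, which is $\K_j$. Since every point of the surface has some $j$ with $\partial_{k_j}P\ne0$, requiring $\ve\in\K_j$ for all $j$ (the set $\K_+$) covers the surface by finitely many patches. This yields the H\"older bound on $\vr^{\ve/2}\,\frac{dE_o}{d\l}\,\vr^{\ve/2}$ locally in $\l>0$. The decay $\vr^{-\ve/2}G\to0$ then gives compactness of $Q(z)$ and its H\"older continuity up to the axis.

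\emph{Case ii).} For $H_o=-\D_1+\D_2$ the surface $\{|k_1|^2-|k_2|^2=\l\}$ with $\l>0$ satisfies $|k_1|\ge\sqrt\l>0$. Hence one always restricts in the $k_1$-direction only, and the single condition $\ve_1+\wt\gr_2>1$ suffices.

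\emph{Main obstacle.} The hardest step is the invertibility of $1+\sign V\,Q(\l\pm i0)$ outside a closed null set, i.e.\ controlling the exceptional set. Analytic Fredholm theory gives that this set is closed and of measure zero, which is enough for completeness (not for absence of embedded eigenvalues). I therefore rely on that rather than attempting any Rellich-type argument. Among the estimates, the delicate one is the anisotropic trace estimate near points where several $\partial_{k_j}P$ vanish simultaneously; for $\l>0$ this is avoided by the patch covering described above.
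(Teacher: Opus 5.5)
Your sketch is the Kato--Lavine local smooth-perturbation argument: you factor $V=|V|^{1/2}\sign V\,|V|^{1/2}$, derive local $H_o$-smoothness from an anisotropic trace estimate on the level surfaces, cover the surface by patches (giving $\bigcap_j \K_j$ in case i) and only $\K_1$ in case ii)), and obtain local $H$-smoothness off a closed null exceptional set; this is exactly the route the paper attributes to \cite{DKY77}, using the imbedding theorems of \cite{KY77}. The paper gives no proof of its own and simply quotes that result, so treating the \cite{KY77} trace theorem as a black box, including the exponent count that leads to $\ve_j+\wt\gr-\wt\gr_j>1$, is at the same level of detail as the paper.
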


Remark that from Proposition \ref{TKDY} and  Theorem \ref{T1.1} we
deduce that $\K_+\subseteq \E_+$, but below we show that $\K_+=\E_+$
for specific cases.

\subsection {Examples}
We illustrate Theorem \ref{T1.1} by few examples.

\no {\bf Example 1.} Let $ H=H_o+V$ in $L^2(\R^d)$, where $H_o=-\D$
and the potential  $V\in \cL_\ve $.

\no A) The general case: $\n\le d$  and $\o=\R_+$. Recall that
$a_j=2$ and $\gr_j={1\/4}\min\{2\ve_j, d_j\}\le {d_j\/4}$ and
$\ve_j\ge 0, j\in \J_\n$ and $\gr=\sum_{}\gr_j$. Here we have set
$\E_+=\cap_1^\n \E_j$, where  $\E_j, j\in \J_\n$ is given by
\begin{multline}
\lb{1.2} \tes \E_j=\Big\{ \ve_j+\gr-\gr_j>1\Big\}\sm \Big\{ \ve_j\le
{1\/2},\ \gr_m\le {1\/2},\ {\rm for \ some }\ m\in \J_\n\sm \{j\}
\Big\},
\end{multline}
so that $\gr_j\le {1\/2}$ if $d_j\le 2$ and $\gr_j>{1\/2}$ if
$d_j\ge 3, \ve_j>1$. Comparing \er{1.2}, \er{1.3} we have that
$\E_+\subseteq  \K_+$. Thus if $\ve\in  \K_+$, then the wave
operators $W_\pm (H, H_o)$ are complete, if in addition $\ve\in
\E_+$, then $\s_{sc}(H)=\es$ and eigenvalues of $H$ can accumulate
only to zero.

\no B) Discuss  the simple case: $\n=d$. Consider $\K_+=\cup \K_j$
for the case $\ve_j=\ve_1<1$ for all $j\in \J_\n$. Thus we have
$\K_j=\{ \ve_j+\wt\gr-\wt\gr_j>1\}=\{ \ve_j=\ve_1>{1\/d}\}$ \ for
all $ j\in \J_\n$. By Proposition \ref{TKDY}, if $\ve\in \K_+$, then
the wave operators $W_\pm (H, H_o)$ exist and are complete. From
\er{1.2} we obtain
$$
\tes \gr_j\le {1\/4},\qq \E_j=\{ \ve_j+\gr-\gr_j>1\}\sm \{ \ve_j\le
{1\/2}\} \ss\{ \ve_j=\ve_1> {1\/2}\},\qqq j\in \J_\n,
$$
which yields $\gr_j={1\/4}, \gr={d\/4},
\E_j=\{4\ve_j>\max\{2,5-d\}\}$. In particular, we have
$\E_+=\{\ve_j>{1\/2},\ \forall \ j\in \J_\n\}$ for $d\ge 3$. By
Theorem \ref{T1.1}, if $\ve\in \E_+$, then $\s_{sc}(H)\cap \o=\es$
and eigenvalues of $H$ can accumulate  only to zero. Note that we
have $\E_+\ss \K_+$.

\no C) For simplification we consider the case $\n=2, \ d=d_1+d_2$.

\no $\bu$ If    $\ve_j\le {d_j\/2}, j=1,2$, then we have
\[
\lb{1.3xx}
\begin{aligned}
 \tes  \E_+=\{ \ve_1+{\ve_2\/2}> 1,\  {\ve_1\/2}+\ve_2>
1\}=\K_+, \qqq \E_o=\{ \ve_1+\ve_2> 2\}.
\end{aligned}
\]

\no $\bu$ If    $d_j>\ve_j> {d_j\/2},  j=1,2$, then $\gr_j=
{d_j\/2}$ and  we have
\[
\lb{1.3xx}
\begin{aligned}
& \tes  \E_1=\{ \ve_1+{d_2\/4}> 1\}\sm \{ \ve_1<{1\/2}, {d_2\/4}\le
{1\/2}\}=\{ \ve_1+{d_2\/4}> 1\},\qq \E_2=\{ \ve_2+{d_1\/4}> 1\},
\\
& \tes \K_+=\{ \ve_1+{\ve_2\/2}> 1,\  {\ve_1\/2}+\ve_2> 1\}\ss
\E_+=\E_1\cap \E_2.
\end{aligned}
\]
 \no $\bu$ If $\ve\in\E_o$, then  by Theorem \ref{T1.1}, the number of
eigenvalues of $H$ is finite and here
$$
\tes \E_o=\{\gr_1+\gr_2>1\}\sm \{ \gr_1\le {1\/2} \ {\rm or}\
\gr_2\le {1\/2}\}=\{\gr_1+\gr_2>1\}.
$$
Then $d=d_1+d_2\ge 5$ and $\max\{\gr_1,\gr_2\}>{1\/2}$. If $\gr_1>
{1\/2}$, then $\ve_1>1, d_1\ge 3$ and we get
$$
\tes \E_o=\{\gr_1+\gr_2>2, \ve_1+{d_2\/2}>2, \ve_2+{d_1\/2}>2\}.
$$

\no {\bf Example 2.} Let $H=H_o+V$, where $H_o=-\D_1+\D_2$ on
$L^2(\R^d)$ and the potential $V\in \cL_\ve $. Consider the case
$\n=2, d=d_1+d_2$ We are interesting in the positive part of the
operator $H$ and thus $\o=\R_+$. By Theorem \ref{TKDY}, if $\ve\in
\K_+=\{\ve_1+\wt\gr_2>1\}$, then  the wave operators $W_\pm (H,
H_o,\R_+)$ exist and are complete.

By Theorem \ref{T1.1}, if $\ve\in \E_+$, then    $\s_{sc}(H)\cap
\o=\es$ and positive eigenvalues of $H$, belonging to $\o$, have
finite  multiplicity and can accumulate  only to zero. Here we have
\[
\lb{1.6} \tes \E_+=\E_1=\{\ve_1+\gr_2>1\}\sm \{ \ve_1\le {1\/2},
\gr_2\le {1\/2}\}=\{\ve_1+\gr_2>1\}.
\]
Note that if $\ve_2\ge d_2$, then we have $\K_+=\E_+$.

\no {\bf Example 3.} We illustrate Theorem \ref{T1.2} by an example.
Let $H_o=-\D, H=-\D+V$ on $L^2(\R^d), d=d_1+d_2$, where the
potential  $V\in \cL_\ve, \ve\in \E_+$ and the set $\E_+$ is defined
in Example 1. Define a function and intervals
$$
h(\l)=(1+\l^4)^r>0, \qqq \l\in \o=\R_+, \ \qqq r>0, \qqq \O=(1,\iy).
$$
Then due to Theorem \ref{T1.2} and Example 1,  for the operators
$T_0=h(H_o), T_1=T_0+V$ we have:

1)  the wave operators $W_\pm (T_1, T_0,\O)$ exist and are complete,

2) $\s_{sc}(T_1)\cap \O=\es$ and eigenvalues of $T_1$, belonging to
$\O$ can accumulate  only at $1$ and $\iy$.

\no {\bf Example 4.} Let $H_o=-\D_1+\D_2, H=H_o+V$ on $L^2(\R^d),
d=d_1+d_2$, where the potential $V\in \cL_\ve, \ve\in \E_+$ and the
set $\E_+$ is defined by \er{1.6}. Define the function and an
interval
$$
h(\l)=\sinh \l, \qqq \l\in \R, \qqq \O=\o=\R_+.
$$
Then due to Theorem \ref{T1.2} and Example 2,  for the operators
$T_0=h(H_o), T_1=T_0+V$ we have:

\no 1)  the wave operators $W_\pm (T_1, T_0,\O)$ exist and are
complete, $\s_{sc}(T_1)\cap \O=\es$,

\no 2)  eigenvalues of $T_1$, belonging to $\O$,  have finite
multiplicity and can accumulate only at 1 and $+\iy$.

\subsection {Preliminary results}
We denote the class of all bounded and compact operators in a
Hilbert space $\mH$ by ${\cB}(\mH)$ and ${\cB}_\iy(\mH)$
respectively. Consider self-adjoint operators $H_o, H=H_o+V$ acting
on the Hilbert space $\mH$, where $V$ is a bounded operator and
$P_{ac}(H_o)=\1$. We formulate conditions on $H_o$ and $V$ and the
interval $\o\ss \R$.

\no {\bf Condition 1.} {\it The operator $V(H_o-i)^{-1} \in
\cB_\iy$.}

\no {\bf 2.} {\it Let an interval $\o\ss\R$. For any $\vp\in
C_0^\iy(\R)$ such that $0\le\vp\le 1$ and $\supp (\vp-1)=\o$   there
exist bounded operators $\z_j, \vp_j, j\in \J_\n$ for some $\n\in
\N$ such that
\[
\lb{2.1} \vp(H_o)=\sum_1^\n \z_j\vp_j,
\]
where all operators $H_o, \z_j, \vp_j, j\in \N_\n$ are commute. For
each $j\in \J_\n$ there exist two bounded operators $Q_j^\pm$ such
that $Q_j^++Q_j^-=\1$ and the following holds true:}
\[
\lb{2.1x} s-\lim  {Q_j^\pm}^* e^{\pm itH_o}=0 \qqq as \qqq t\to \iy.
\]
\no {\bf  3.} {\it For any $(j,f)\in \J_\n\ts \mH$ and every $r\ge
1$ the following estimate holds true
$$
\int_{t>r}\|Ve^{\mp itH_o}\z_j Q_j^\pm f\|dt\le C_r\|f\|,
$$
where the constant $C_r$ does not depend on $f$ and $C_r\to 0$ as $r\to \iy$.

\no {\bf 4.} 
For some $m\in \J_\n$ and any $f\in L^2(\R^d), r\ge 1$ the following
estimate holds true:
\[
\lb{3.6z} \int_{t>r}\|\vr^{\ve}e^{\mp itH_o}Q_m^\pm f\|dt\le
C_r\|f\|,
\]
where the constant $C_r\to 0$ does not depend on $f$ and $C_r\to 0$
as $r\to \iy$. }

From  Condition 1 the standard arguments yield that for any $\vp\in C_0^\iy(\R)$  we have
\[
\lb{co} \vp(H)-\vp(H_o)\in \cB_\iy,
\]
see e.g., \cite{RS75}. From  Conditions 1 and  3 we obtain
\[
\lb{Kpm} K_j^\pm:=\vp(H)(W_\pm(\o) -\1)\z_j Q_j^\pm \in \cB_\iy,
\qqq \forall \  j\in \J_\n.
\]
We describe scattering for $H_o,H$ and the spectrum of $H$.

\begin{theorem}
\lb{T2.1} i)  Let $H=H_o+V$, where $H_o=P(-\na)$ and the potential
$V$ obeys Conditions 1-3 for some interval $\o\ss\R$. Then the wave
operators $W_\pm (H,H_o,\o)$ exist and  are complete,
$\s_{sc}(H)\cap \o=\es$ and eigenvalues of $H$, belonging to $\o$,
have finite multiplicity and can accumulate only at the ends of the
interval $\o$.

\no ii) Let $\ve\in \E_o$, where $\ve_m>{1\/2}$ for some  $m\in
\J_\n$.   Let Conditions 1,4 and \er{2.1x} hold true for $j=m$. Then
operators $H$ has finite number of eigenvalues on any bounded
interval.
\end{theorem}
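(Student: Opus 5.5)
The argument is an abstract Enss-type proof, run on the decomposition \er{2.1} and the compactness facts \er{co}, \er{Kpm}.

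\emph{Existence.} By Condition 3 and Cook's method, $\frac{d}{dt}e^{itH}e^{-itH_o}\z_jQ_j^\pm f=ie^{itH}Ve^{-itH_o}\z_jQ_j^\pm f$ is integrable for $t>r$. Hence the limits exist on the ranges of $\vp_j\z_jQ_j^+$ as $t\to+\iy$, and similarly for $t\to-\iy$ with $Q_j^-$. For the remaining pieces $\z_jQ_j^\mp$ we use \er{2.1x}: $e^{-itH_o}\z_j Q_j^{\mp}f$ is not controlled by Condition 3 in the direction $\pm$. So I would instead show existence on $\vp(H_o)\mH$, which is dense in $E_o(\o)\mH$, by a Kato--Cook argument: write $\vp(H_o)=\sum_j\z_j\vp_j(Q_j^++Q_j^-)$ and treat each $Q_j^\pm$ term in its own time direction. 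Existence of $W_+$ on the $Q_j^-$ parts then follows by duality, i.e. from existence of the adjoint limits.

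\emph{Completeness and absence of singular continuous spectrum.} This is the heart of the proof, via Enss's compactness argument. Take $\psi\in\mH_c(H)\cap E(\o)\mH$ orthogonal to $\Ran W_+\oplus\Ran W_-$ (for completeness), or more generally in $\mH_{c}(H)$ (for $\s_{sc}=\es$). By RAGE, using Condition 1, $\psi_t=e^{-itH}\psi$ satisfies $\frac1T\int_0^T\|C\psi_t\|dt\to0$ for every compact $C$, in particular for $C=\vp(H)-\vp(H_o)$ and for $K_j^\pm$. Choose a sequence $t_n\to\iy$ along which these quantities tend to zero. Then
\[
\psi_{t_n}\approx \vp(H)\psi_{t_n}\approx\vp(H_o)\psi_{t_n}=\sum_j\vp_j\z_j(Q_j^++Q_j^-)\psi_{t_n}.
\]
For the outgoing part $\z_jQ_j^+$, the compactness of $K_j^+$ gives $\vp(H)\z_jQ_j^+\psi_{t_n}\approx \vp(H)W_+\z_jQ_j^+\psi_{t_n}\in\Ran W_+$. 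For the incoming part, \er{2.1x} and Condition 3 in the negative time direction give $\z_jQ_j^-\psi_{t_n}\approx W_-\z_jQ_j^-\psi_{t_n}$, after commuting with $e^{-it_nH}$.

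Pairing with $\psi_{t_n}$, which is orthogonal to $\Ran W_\pm$ (these are invariant under $H$), yields $\|\psi\|^2=\lim\la\psi_{t_n},\psi_{t_n}\ra=0$. The same argument gives asymptotic completeness $E(\o)\mH_{c}=\Ran W_\pm$, and since the wave-operator ranges lie in $\mH_{ac}$, we get $\s_{sc}\cap\o=\es$.

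\emph{Eigenvalues in the interior.} Replacing $\psi_t$ by an orthonormal sequence of eigenvectors $\psi_n$ with eigenvalues accumulating at an interior point of $\o$, one has $\psi_n\rightharpoonup0$. The same decomposition with the compact operators $K_j^\pm$ and $\vp(H)-\vp(H_o)$ gives $\|\psi_n\|\to0$, a contradiction; infinite multiplicity is excluded the same way. The main obstacle is the careful handling of the incoming pieces $Q_j^-$ in the forward-time limit. That step needs the commutation of $H_o,\z_j,\vp_j$ assumed in Condition 2 and the uniform smallness $C_r\to0$.

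\emph{Part ii).} Take eigenvectors $\psi_n$ of $H$, orthonormal, with eigenvalues in a bounded interval, and suppose there are infinitely many. Write $\psi_n=Q_m^+\psi_n+Q_m^-\psi_n$. Since $V=\vr^\ve\cdot(\vr^{-\ve}V)$ with $\vr^{-\ve}V$ bounded, Duhamel gives
\[
Q_m^{+*}\psi_n=Q_m^{+*}e^{-itH_o}e^{itH}\psi_n\pm i\int_0^t Q_m^{+*}e^{-isH_o}Ve^{isH}\psi_n\,ds .
\]
As $t\to\iy$ the first term vanishes by \er{2.1x}. The integral splits into $\int_0^r$, which is compact in $\psi_n$ by Condition 1 (after inserting a cutoff in $H$), plus $\int_r^\iy$, which is bounded by $C_r$ via the dual form of \er{3.6z}. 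Hence $\limsup\|Q_m^{+*}\psi_n\|\le C_r$ for every $r$, so this quantity tends to zero; the same holds for $Q_m^-$. Therefore $1=\|\psi_n\|^2=\la\psi_n,(Q_m^++Q_m^-)^*\psi_n\ra\to0$, a contradiction.
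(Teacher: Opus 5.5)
\textbf{Completeness has a genuine gap.} You take $\psi\in E(\o)\mH_c(H)$ orthogonal to \emph{both} $\Ran W_+$ and $\Ran W_-$, and the incoming terms disappear only because $\psi_{t_n}\perp\Ran W_-$. Showing that such $\psi$ vanish proves only $E(\o)\mH_c(H)\subseteq\ol{\Ran W_++\Ran W_-}$. This gives $\s_{sc}(H)\cap\o=\es$, since a singular continuous vector is orthogonal to both ranges. It does not give $\Ran W_+=E(\o)\mH_{ac}(H)$, and nothing supports ``the same argument gives asymptotic completeness''. Note also that \er{2.1x} does no work in your version: $\vp(H)\z_jQ_j^-\vp_j\psi_{t_n}\approx\vp(H)W_-\z_jQ_j^-\vp_j\psi_{t_n}$ is just compactness of $K_j^-\vp_j$.

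The missing step is exactly how the paper uses Condition 2. Assume only $\psi\perp\Ran W_+$. Then kill the incoming pairing using the intertwining relation $W_-^*e^{-itH}=e^{-itH_o}W_-^*$, the commutation of $\z_j$ with $H_o$, and \er{2.1x}:
$$
(\psi_{t_n},W_-\z_jQ_j^-\vp_j\psi_{t_n})=(e^{-it_nH_o}W_-^*\psi,\,\z_jQ_j^-\vp_j\psi_{t_n})=(Q_j^{-*}e^{-it_nH_o}\z_j^*W_-^*\psi,\,\vp_j\psi_{t_n})\to0 .
$$
With this correction your RAGE version gives $E(\o)\mH_c(H)=\Ran W_+$, i.e. completeness and absence of singular continuous spectrum at once. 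The paper instead starts from $f\in\mH_{ac}$ and gets the singular part separately from $\dim E(\o)(\mH\om\mH_{ac})<\iy$. Also insert the $Q$'s as $\z_j\vp_j=\z_j(Q_j^++Q_j^-)\vp_j$, as the paper does. In your ordering $\vp_j\z_jQ_j^\pm$ the left entry of the pairing becomes $\vp_j^*\psi_{t_n}$, which need not be orthogonal to $\Ran W_+$, and $\vp_j$ does not commute with $W_\pm$.

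\textbf{Existence, eigenvalues, and part ii).} Your existence step fails as written. Existence of $W_+$ on the $Q_j^-$-components cannot follow ``by duality'', because existence of the adjoint limits $s$-$\lim e^{itH_o}e^{-itH}$ is the completeness statement itself. The paper simply cites stationary phase. An abstract route: write $e^{-itH_o}\vp(H_o)g=e^{-i(t-T)H_o}\vp(H_o)e^{-iTH_o}g$, and make the $Q_j^-$-components of $e^{-iTH_o}g$ small for large $T$ by \er{2.1x}. This uses $Q_j^{-*}=Q_j^-$, which holds for the self-adjoint $\cT_j^\pm$. Then apply Cook with Condition 3, which is uniform in the initial vector, to the $Q_j^+$-components.

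Your eigenvalue argument in i) is the paper's.

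Part ii) is correct up to signs, and it takes a different route. The paper forms the compact operators $X^\pm$ of \er{e2m} and uses $\vp(H)(W_\pm-\1)Q_m^\pm=X^\pm$ together with orthogonality of eigenvectors to $\Ran W_\pm$, so \er{2.1x} is never invoked. You apply Duhamel directly to the eigenvectors, kill the boundary term with \er{2.1x}, and split the integral into a compact piece plus a $C_r$-tail, avoiding wave operators entirely. However, the identity you wrote is the right one for $Q_m^-$, not $Q_m^+$. Indeed $Q_m^{+*}e^{-itH_o}\to0$ only as $t\to-\iy$, and the dual of \er{3.6z} controls $\int_r^\iy Q_m^{+*}e^{isH_o}\vr^\ve h(s)\,ds$. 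For $Q_m^+$, use $\psi_n=e^{itH_o}e^{-itH}\psi_n+i\int_0^te^{isH_o}Ve^{-isH}\psi_n\,ds$ with $t\to+\iy$.
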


\no {\bf Proof.} i) Note that the existence of the wave operators is
a simple fact and it is established by the stationary phase method
\cite{RS79}. In the proof we use the known Enss approach from
\cite{E78}, modifed for our case. Let $W_+=W_+(\o)$ for shortness.
We show that $W_+\mH=\mH_{ac}$, the proof for $W_-$ is similar. The
proof is based on the contradiction. Let $f\in \mH_{ac}\om W_+\mH$
and $f\ne 0$. Thus we can  assume that $E(\o)f=f$ for some small
interval $\o\subset \R_+$, the proof for $\o\subset \R_-$ is
similar. Let $0\le \vp\in C_0^\iy(\R)$   be such that
\[
\tes \vp|_{\o}=1,\qqq {\rm and} \qqq \supp \vp\subset \R_+.
\]
Let $f_n=e^{it_nH}f$ for some increasing sequence   $t_n>0, n\in \N$ such that
$t_n\to +\iy$ as $n\to \iy$.  Condition 1 and \er{co} yield
$(\1-\vp(H_o))f_n=(\vp(H)-\vp(H_o))f_n$ and  $\vp(H)-\vp(H_o)\in \B_\iy$ and then
\[
\lb{e1}
\begin{aligned}
&\|f\|^2=\|f_n\|^2=(f_n,\vp(H_o)f_n)+(f_n,(\vp(H)-\vp(H_o))f_n)
=(f_n,\vp(H_o)f_n)+o(1)
\end{aligned}
\]
as $ n\to \iy$. Due to \er{2.1} we have
$(f_n,\vp(H_o)f_n)=(f_n,\sum_{1}^\n \z_j\vp_jf_n)$, where
 \[
\lb{e2}
\begin{aligned}
& (f_n,\z_j\vp_jf_n,)=\sum_{\t=\pm}(f_n,\z_jQ_j^\t \vp_jf_n)
=\sum_{\t=\pm}\Big[(f_n, W_\t\z_jQ_j^\t\vp_j f_n)-(f_n, K_j^\t
f_n)\Big]
\\
&=\sum_{\t=\pm}(f_n, W_\t\z_jQ_j^\t \vp_j f_n)+o(1)
=
(f_n, W_-\z_jQ_j^- \vp_jf_n)+o(1),
\end{aligned}
\]
since $f\in \mH_{ac}(H)\om W_+\mH$ and due to \er{Kpm} we have $K_j^\pm\in\B_\iy$.
Then  Condition 2 implies
$$
(f_n, W_-\z_j\vp_jQ_j^- f_n)=(f, W_- e^{it_nH_o}\z_jQ_j^-
\vp_jf_n)=o(1)
$$
as $n\to \iy$, which gives the contradiction.

In order to show other results we prove that $\dim E(\o)(\mH\om
\mH_{ac})<\iy$. The proof is based on the contradiction. Let $f_n\in
E(\o) (\mH\om \mH_{ac}), n\in \N$ be some orthonormal sequence .
Thus we can  assume that $E(\o)f_n=f_n$. Using above arguments
 and $f_n \bot W_\pm\mH$, we obtain
$$
1=\|f_n\|^2=\sum_{\t=\pm}\sum_{j=1}^\n(f_n, W_\t\z_jQ_j^\t \vp_j
f_n)+o(1)=o(1).\qqq \qqq
$$
ii) Let $\ve\in \E_0$.  We show that the number of eigenvalues
counting multiplicity is finite. It is enough to consider the
interval $\o=(-1,1)$. Let $0\le \vp\in C_0^\iy(\R)$   be such that $
\vp|_{\o}=1$. From Condition 1, \er{co} we have
$\vp(H)V=(\vp(H)-\vp(H_o))V+\vp(H_o)V\in \cB_\iy$. Then
from Condition 4,  we obtain
\[
\lb{e2m} X^\pm:=\pm i\int_0^\iy e^{\pm itH} \vp(H)Ve^{\mp
itH_o}Q_m^\pm dt \in \cB_\iy.
\]
Let $f_n\in E(\o) \mH, n\in \N$ be some orthonormal sequence of
eigenvalues of $H$. If $n\to \iy$, then from \er{e2m} we have the contradiction, since
$$
\begin{aligned}
1=\|f_n\|^2=\sum_{\t=\pm}\Big[(f_n, W_\t Q_m^\t f_n)-(f_n, X^\t
f_n)\Big]=-\sum_{\t=\pm}(f_n, X^\t f_n)=o(1).\qqq \qqq \BBox
\end{aligned}
$$


We  consider the eigenvalues on the interval $\o=(s,\iy)$ for large
$s\gg1$.  We will show that under Condition  $3(\iy)$, the operator
$H$ has not eigenvalues greater than $s$. The proof of the case
$(-\iy, -s)$ is similar.  We introduce the smooth function $\z\in
C^\iy(\R)$ by
\[
\lb{dz}
\z(\l)=\ca 1   & \l\ge 1\\ 0 & \l\le {1\/2}\ac.
\]
 We present conditions for the
case $\o=(s, \iy)$, where $s\ge 1$.

\no {\bf Condition 3($\iy$).} {\it
 For any $(j,f)\in \J_\n\ts \mH$
and $s\ge 1$ large enough the following estimate holds true:
$$
\int_{\R_+}\|Ve^{\mp itH_o}\z(h_j/s)Q_j^\pm f\|dt\le C(s)\|f\|,
$$
where the constant $C(s)$ does not depend on $f$ and $C(s)\to 0$ as
$s\to \iy$.}

\begin{theorem}
\lb{T2.2} Let there exist the wave operators  $W_\pm(H,H_o,\o)$ for
some $\o=(s,\iy)$ and let Conditions 3($\iy$) hold true for some
$s\ge 1$ large enough. Then  the operator $H$ has no eigenvalues
greater than $s$.
\end{theorem}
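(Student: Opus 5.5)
We argue by contradiction, following the scheme used in part ii) of Theorem \ref{T2.1}, but now exploiting that the constant $C(s)$ in Condition 3($\iy$) is small rather than merely finite. Let $f$ be a normalized eigenvector of $H$, so $Hf=\l f$ with $\l>s$. The high-energy cutoffs $\z(h_j/s)$, where $h_j=p_j(-i\na_j)$ are the one-dimensional pieces of $H_o$, play the role of the partition \er{2.1}. The aim is to show that $f$ is almost entirely concentrated where some $h_j$ is large, and that on that part $f$ can be written through the wave operators plus a remainder of norm at most $C(s)\|f\|$. Orthogonality of $f$ to $\Ran W_\pm$ then forces $\|f\|\le C(s)\|f\|$, which is impossible once $C(s)<1$.

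\textbf{Step 1 (localization).} Since $H_o=\sum_j h_j$, on the spectral region $\{H_o>s'\}$ at least one $|h_j|\ge s'/\n$. Choose commuting smooth cutoffs with
$$\vp(H_o)=\sum_{j}\z(h_j/s)\vp_j, \qquad \vp(\l)=1 \ \text{for} \ \l\ge 2\n s.$$
Next control $(1-\vp(H_o))f$. From $(H_o+V-\l)f=0$ and $\|V\|<\iy$ we get $\|(H_o-\l)f\|\le\|V\|$. Because $\l>s$ is large and $\vp$ cuts at a fixed multiple of $s$, a spectral estimate gives $\|(1-\vp(H_o))f\|\le \|V\|/\dist(\l,\{H_o<2\n s\})$. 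This bound may fail if $\l$ is close to $2\n s$. I would therefore instead take $\vp$ supported near $\l$ itself, say $\vp(H_o)$ localizing to $|H_o-\l|<\l/2$. This region forces $\max_j|h_j|\ge \l/(2\n)>s$, and the remainder becomes $O(\|V\|/\l)=O(1/s)$.

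\textbf{Step 2 (Cook representation).} For each $j$ and sign, write
$$W_\pm \z_jQ_j^\pm g-\z_jQ_j^\pm g=\pm i\int_0^\iy e^{\pm itH}Ve^{\mp itH_o}\z_jQ_j^\pm g\,dt .$$
Condition 3($\iy$) bounds the norm of the right-hand side by $C(s)\|g\|$. Taking the inner product with $f$ and using $f\perp \Ran W_\pm$ (eigenvectors are orthogonal to $\mH_{ac}=\Ran W_\pm$) gives $|(f,\z_jQ_j^\pm g)|\le C(s)\|g\|$. Summing over $j$ and $\pm$ with $g=\vp_jf$ yields
$$\|f\|^2=(f,\vp(H_o)f)+O(1/s)\le 2\n\, C(s)\max_j\|\vp_j\|+O(1/s).$$
The left side equals $1$ while the right side tends to $0$ as $s\to\iy$, so no eigenvalue exceeds $s$ for $s$ large.

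\textbf{Main obstacle.} The delicate point is Step 1. The cutoff must localize $f$ to $\{\exists j: h_j\ge s\}$ with a uniformly small error and with $\vp_j$ bounded uniformly in $s$ and $\l$. For non-elliptic $P$ this is subtle, because large $H_o$ need not mean that every $h_j$ is large, and negative $h_j$ may cancel positive ones. I would first try scaling the whole decomposition, i.e. using $\vp(H_o/s)$ with $\vp$ built from $\z(h_j/s)$, and bound the complementary part via $\|(H_o-\l)f\|\le\|V\|$. I would also handle the sign issue by applying $\z$ to $\pm h_j$ separately.
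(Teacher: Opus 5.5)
Your argument is correct and is essentially the paper's proof. The paper also localizes the eigenvector to $\{H_o\gtrsim s\}$ with an $O(1/s)$ error; it uses $\|(H-z)f\|\le 2$ with $z=b+i$, $b\ge 2s$, where you use $\|(H_o-\l)f\|=\|Vf\|$. It then splits $\z(H_o/s)=\sum_j\z(h_j/s)\vp_j$, inserts $Q_j^++Q_j^-=\1$, and kills each term using $f\perp\Ran W_\pm$ together with the Cook bound from Condition 3($\iy$).

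The only slip is that $H_o>\l/2$ gives some $h_j>\l/(2\n)$, which exceeds $s$ only when $\l>2\n s$, not merely $\l>s$. So the cutoff scale must be shrunk by a factor of order $\n$, which is harmless since $C(s)\to0$; the paper's $b\ge 2s$ has the same looseness. Your worry about signs is unfounded, because $\sum_j h_j$ large always forces some $h_j$ to be large.
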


\no {\bf Proof.} Let $f\in E(H, (b,b+1))\mH$ be an eigenvalues of
$H$ and $\|f\|=1$, where $b\ge 2s$  and $z=b+i$. Then $\|(H-z)f\|\le
2\|f\|$ and we have
$$
\|(\1-\z(H_o/s))f\|\le \|(\1-\z(H_o/s))(H-z)^{-1}\| 2  \|f\| \le
{2\|f\|/s}.
$$
For any $s$ large enough the function  $\z(P/s)$ has a decomposition
\[
\lb{cc1} \z(P/s)=\sum_1^\n \z(p_j/s)\vp_j(s),
\]
for some bounded functions $\vp_j(s, k), (j,k)\in \J_\n\ts \R^d$.
Thus using \er{cc1} and repeating the  arguments from Theorem
\ref{T2.1} we obtain
$$
1=\|f\|=(\z(H_o/s)f,f)+O(1/s)=\sum_1^\n (\z_j(s)\vp_j(s)f,
f)+O(1/s),
$$
$$
(\z_j(s)\vp_j(s)f,f)=\sum_{\t=\pm}\Big[(W_\t\z_j(s)Q_j^\t\vp_j(s)f,f)
+((\1-W_\t)\z_j(s)Q_j^\t\vp_j(s)f,f)  \Big]=o(1),
$$
which yields a contradiction  for $s$ large enough. \BBox

\section {Proof of Theorem \ref{T1.1}}
\setcounter{equation}{0}

We prove Theorem \ref{T1.1}  checking  Conditions 1,2,3, and 3
($\iy$) for the operator $H_o=P(-i\nabla)$ and the potential $V\in \cL_{\ve,q}$, where $ \ve \in \E_\pm, q>1$. We define operators $Q_j^\pm, j\in \J_\n$.

For the case $\n=1$ we define the operator $\cT^\pm$ from \cite{Y82}
such that $\cT^-+\cT^+=\1$ and describe their properties.  Let an
operator $H_o$ act on $L^2(\R^d)$  by
\[
\lb{cc} \tes  H_o=|\D|^{a\/2}, \qqq d\ge 1, \qqq or \qqq
H_o=|\pa|^{a-1}\pa, \qq \pa:=-i{d\/dx},\qqq d=1,
\]
where $a>1$. We denote the Fourier transform of a function $f(x),
x\in \R^d$ by
$$
\wh f(k)=(\F_d f)(k)=(2\pi)^{-{d\/2}}\int_{\R^d}e^{-ikx}f(x)dx,\qq k\in \R^d.
$$
Define the operator $\cT^\pm$ in the spectral
representation of the space $\mH$, when the operator $H_o$ acts on $L^2(\R^d)$
as a multiplication operator by an independent variable.

\no {\bf Definition of $\cT^\pm$.} \no $\bu$ {\it Consider
$H_o=|\D|^{a\/2}, \ d\ge 1$ from \er{cc}. The spectral
representation of $H_o$ is given by $\P=\cU \F_d$, where $\cU:
L^2(\R^d)\to L^2(L^2(\dS^{d-1}), \R_+, dt) $ has the form
$$
(\cU f)(\l,\m)=a^{-{1\/2}}\l^{d-a\/2a}f(\l^{1/a},\m)), \qqq
\qq (\l,\m)\in \R_+\ts\dS^{d-1}.
$$
The operator $\P H_o\P^*$ is the multiplication operator by $\l$.
Let $\gJ$ be the natural imbedding of $L^2(\R_+)$ into $L^2(\R)$.
Thus $\gJ^*$ is the natural restriction map from $L^2(\R)$ into
$L^2(\R_+)$. Let $\c_\pm$ be the characteristic function of the set $\R_\pm$.
We define operators $\cT^\pm$ and $\gF_\pm$ by
\[
\lb{dTpm1}
\cT^\pm=\gF_\pm^* \gF_\pm,\qqq
\gF_\pm= \rt[\c_\pm \F_1^*  \otimes  \1\rt]\gJ^* \cU\F_d: L^2(\R^d)\to
\cH_\pm=L^2(L^2(\dS^{d-1}), \R_\pm, d\s).
\]
Let $\wh\s_\pm$ be the multiplication operator by the variable $\s$ in
$\cH_\pm$.

\no $\bu$ Consider the operator $H_o=|\pa|^{a-1}\pa$
 on $L^2(\R)$ from \er{cc}. Its spectral representation is given by $\P=\cU
\F_1$, where $\cU: L^2(\R)\to L^2(L^2(\dS^{0}), \R_+, dt), \
\dS^{0}=\{-1,1\} $ and
$$
(\cU f)(\l,\n)=a^{-{1\/2}}|\l|^{d-a\/2a}f(\l^{1/a}), \qqq
\l^{1/a}=|\l|^{1/a}\sign \l.
$$
Here $\P H_o\P^*$ is the multiplication operator by $\l$.
Define the operator $\cT^\pm$ in terms $\gF_\pm$ by}
\[
\lb{dTpm2}
\cT^\pm=\gF_\pm^* \gF_\pm,\qqq
\gF_\pm= \c_\pm \F_1^*\cU\F_1: L^2(\R)\to
\cH_\pm=L^2(L^2(\dS^{0}), \R_\pm, d\s).
\]
Let $\wh \s_\pm$ be the multiplication operator by the variable $\s$ in
$\cH_\pm$. Below we need

\begin{lemma}
\lb{T2.3} Let  an operator $H_o=|\D|^{a\/2}$ on $L^2(\R^d), d\ge 1$
or $H_o=|\pa_x|^{a-1}\pa_x$ on $L^2(\R), a>1$. Then for any $\d,t>0,
s\ge 1$ the following estimates hold true
\[
\lb{2.2}\tes \|\la x\ra^{\d} e^{\mp itH_o}\cT^\pm\|\le C \la
t\ra^{\gr},\qqq \gr={1\/a}\min\{\d, {d\/2}\},
\]
\[
\lb{2.3} \|\la x\ra^{\d} e^{\mp itH_o}\z(H_o/s)\cT^\pm\|\le C \la t
\ra^{\d},
\]
\[
\lb{2.4} \|\la x\ra^{\d} e^{\mp itH_o}\z(H_o/s)\cT^\pm\|\le C \la
ts^\t \ra^{\d},\qqq \tes \t={a-1\/a},
\]
where the function $\la x\ra=(1+|x|)^{-1}, \qq x\in \R^d$  and  $\z$ is given by \er{dz}.
\end{lemma}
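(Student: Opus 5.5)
The plan is to reduce all three estimates to a single fixed-time statement by scaling, and to handle the high-energy cases \er{2.3}, \er{2.4} by non-stationary phase. Throughout, $\la x\ra=(1+|x|)^{-1}$ is a decaying weight, so each estimate says that outgoing (resp. incoming) states leave every compact region at a definite rate.

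\textbf{Step 1 (translation picture).} In the representation $\gF_\pm$ the operator $H_o$ becomes multiplication by $\l$. After $\F_1^*$ in $\l$, the group $e^{\mp itH_o}$ acts as translation in $\s$ by $\pm t$. Hence $\cT^\pm$, which is multiplication by $\c_\pm(\s)$, satisfies
$$
e^{\mp itH_o}\cT^\pm=\gF^*\c_{\{\pm\s>t\}}\gF e^{\mp itH_o},
$$
and the evolution maps $\Ran\cT^\pm$ into itself.

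\textbf{Step 2 (scaling).} Let $U_\theta f(x)=\theta^{d/2}f(\theta x)$. Since $P$ is homogeneous of degree $a$, we have $U_\theta^*H_oU_\theta=\theta^{a}H_o$ up to the obvious normalisation. Moreover $\cT^\pm$ commutes with $U_\theta$, because the dilation acts on $(\l,\s)$ as $(\theta^{a}\l,\theta^{-a}\s)$ and preserves the sign of $\s$. Choosing $\theta=t^{1/a}$ gives
$$
\|\la x\ra^{\d}e^{\mp itH_o}\cT^\pm\|=\|(1+t^{1/a}|y|)^{-\d}A_\pm\|,\qquad A_\pm=e^{\mp iH_o}\cT^\pm .
$$
So \er{2.2} reduces to the bound $\|(1+R|y|)^{-\d}A_\pm\|\le CR^{-\min\{\d,d/2\}}$ for $R=t^{1/a}\ge 1$. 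For $t\le1$ the estimate is trivial, since the left-hand side is at most $1$.

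\textbf{Step 3 (key local bound; main obstacle).} I would prove that $A_\pm$ maps $L^2$ boundedly into $L^\iy(\{|y|<1\})$. Granting this, split the weight at $|y|=1$. On $|y|>1$ the weight is at most $R^{-\d}$. On $|y|<1$,
$$
\|(1+R|y|)^{-\d}A_\pm f\|_{L^2(|y|<1)}\le\|A_\pm f\|_{L^\iy}\,\|(1+R|y|)^{-\d}\|_{L^2(|y|<1)}\lesssim R^{-\min\{\d,d/2\}}\|f\|,
$$
with a logarithm at $\d=d/2$. I would absorb that logarithm by proving the statement for slightly smaller $\d$, or by replacing $L^\iy$ with a weak-type bound.

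To prove the local bound, I write the kernel of $A_+$ from $y$ to $(\s,\m)$ with $\s>0$ (up to a harmless power of $\l$ from $\cU$):
$$
\int_0^\iy e^{i\l(\s+1)+i\l^{1/a}\m\cdot y}\,\l^{\frac{d-a}{2a}}\,d\l .
$$
For $|y|<1$ the phase $\l(\s+1)+\l^{1/a}\m\cdot y$ has $\l$-derivative at least $(\s+1)-C\l^{1/a-1}$. This is bounded below on the set where $\l\ge\l_0$, and there repeated integration by parts gives decay $(\s+1)^{-N}$ and square integrability in $(\s,\m)$ uniformly in $|y|<1$. The region $\l<\l_0$ is handled by a direct size estimate, using $d\ge1$ and $a>1$. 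Uniform $L^2$-boundedness of the kernel in $y$ yields the $L^\iy$ bound. I expect the delicate point to be the low-energy end together with the factor $\l^{(d-a)/2a}$, so I would first try a dyadic decomposition in $\l$ there.

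\textbf{Step 4 (high energies: \er{2.3}, \er{2.4}).} With the cutoff $\z(H_o/s)$ we have $\l\ge s/2$. At time $t$ the phase has $\l$-derivative of order $t$ whenever $|y|\lesssim t\l^{\t}$, by the same computation as in Step 3 but now unscaled. Non-stationary phase then shows that the part of $e^{\mp itH_o}\z(H_o/s)\cT^\pm f$ lying in $|x|\le c\,ts^{\t}$ is $O(\la ts^\t\ra^{N})\|f\|$. On $|x|\ge c\,ts^\t$ the weight gives $\la ts^\t\ra^{\d}$, which proves \er{2.4}. Since $s\ge1$ and $\t>0$, we have $\la ts^\t\ra\le\la t\ra$, so \er{2.4} implies \er{2.3}. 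For $|\pa|^{a-1}\pa$ on $L^2(\R)$ the same argument applies with $\dS^0=\{\pm1\}$ in place of the sphere.
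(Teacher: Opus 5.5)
Your route differs from the paper's and is sound in outline, but two steps need correcting. The paper proves neither \er{2.2} nor \er{2.3}; it cites [Y82] for them and proves only \er{2.4}, by one integration by parts in $\lambda$ at the operator level. Differentiating $e^{-i\lambda(t+\sigma)}$ produces $\mathfrak F_+^*(\hat\sigma_++t)^{-1}\mathfrak F_+$. The derivative falling on the plane wave $e^{i\lambda^{1/a}(x,\mu)}$ produces a factor $x$ times $\lambda^{-\tau}\lesssim s^{-\tau}$, and one power of the weight absorbs it. This gives $\|(1+|x|)^{-1}e^{-itH_o}\zeta(H_o/s)\mathcal T^+\|\le C(1+ts^\tau)^{-1}$; integer powers follow by iteration and all $\delta>0$ by interpolation.

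Your Step 4 instead splits space at $|x|\sim ts^\tau$ and runs non-stationary phase on the kernel. That works: take $N$ large enough to absorb the volume factor $(ts^\tau)^{d/2}$ that appears when pointwise kernel bounds are integrated over $\{|x|\le c\,ts^\tau\}$. The paper's identity avoids kernels and spatial cut-offs altogether. Deducing \er{2.3} from \er{2.4} via $(1+ts^\tau)^{-\delta}\le(1+t)^{-\delta}$ is a genuine economy over citing [Y82]. Steps 1--3 give a self-contained argument for \er{2.2}, which the paper lacks. The scaling in Step 2 is correct, since dilations act on $\sigma$ by a positive factor and so commute with $\mathcal T^\pm$.

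\textbf{Low energies in Step 3.} A size estimate on $\{\lambda<\lambda_0\}$ gives only $|K|\le C$. That is not square integrable in $\sigma$ over a half-line, so as written the local $L^2\to L^\infty$ bound is not established. Write $S=\sigma+1$ and $\gamma=\frac{d-a}{2a}$. What you need is $|K(y;S,\mu)|\le CS^{-1-\gamma}$ uniformly for $|y|<1$. This suffices since $2(1+\gamma)=\frac{d+a}{a}>1$, and it is the best possible rate, because at $y=0$ one has $|K|=cS^{-1-\gamma}$.

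The cut must depend on $S$. On $\lambda\lesssim S^{-1}$ the size bound gives $\int_0^{1/S}\lambda^\gamma d\lambda\sim S^{-1-\gamma}$. On $\lambda\gtrsim S^{-1}$ one has $|S+\frac1a\lambda^{-\tau}\mu\cdot y|\ge S-\frac1a S^{\tau}\ge S/2$ for large $S$. Indeed, any stationary point satisfies $\lambda_*\le (aS)^{-1/\tau}\ll S^{-1}$ because $1/\tau>1$. Each integration by parts then gains $(\lambda S)^{-1}$, so $N>1+\gamma$ steps again give $O(S^{-1-\gamma})$. Your dyadic decomposition works exactly when the dividing scale is $\lambda\sim S^{-1}$, not a fixed $\lambda_0$.

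\textbf{The endpoint $\delta=d/2$.} Neither proposed remedy removes the logarithm. A bound for $\delta'<d/2$ only yields the rate $t^{-\delta'/a}$, since the weighted norm decreases in $\delta$. Replacing $L^\infty$ by a weak-type bound does not change $\|(1+R|y|)^{-d/2}\|^2_{L^2(|y|<1)}\sim R^{-d}\log R$.

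In fact the logarithm is genuine. By the corrected kernel bounds, $A_+A_+^*$ has a kernel $N(y,y')$ continuous on $\{|y|,|y'|<1\}$. Moreover $N(0,0)=\|P\chi_{\{\sigma>1\}}k_0\|^2>0$, where $P=\mathfrak F\mathfrak F^*$ and $k_0\neq0$ is the (conjugate) kernel at $y=0$. Positivity holds because a nonzero $L^2$ function vanishing on a half-line cannot have Fourier support in a half-line. Now test $(1+R|y|)^{-d/2}A_+A_+^*(1+R|y|)^{-d/2}$ on $\psi=(1+R|y|)^{-d/2}\chi_{\{|y|<\epsilon\}}$ with $\epsilon$ small. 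This gives $\|(1+R|y|)^{-d/2}A_+\|^2\ge cR^{-d}\log R$, that is, $\|(1+|x|)^{-d/2}e^{\mp itH_o}\mathcal T^\pm\|\ge c\,t^{-d/(2a)}(\log t)^{1/2}$ for large $t$.

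So your argument proves exactly what is true: \er{2.2} for $\delta\ne d/2$, with a $(\log t)^{1/2}$ loss at $\delta=d/2$. That loss is harmless for the paper, whose conditions are strict inequalities. You should state it that way rather than claim the logarithm can be absorbed.
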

\no {\bf Proof}. The estimates \er{2.2},  \er{2.3} were proved in
\cite{Y82}. We shall prove \er{2.4} for the case $H_o=|\D|^{a\/2}$
and the sign $"+"$. The proof of other cases is similar. Let
$F(\s,\m)=(\gF_+f)(\s,\m), f\in \mH$. From the definition of $\cT^+=\gF_+^* \gF_+$
we obtain that $g(x):=(e^{-itH_o}\z(H_o/s)\cT^+f)(x)$ has the form
$$
g(x)=C_d\int_0^\iy d\s\int_{\dS^{d-1}}
F(\s,\m)d\m \int_0^\iy e^{i(x,\n)\l^{1/a}-i\l(t+\s)}\z(\l/s)\l^\g
d\l,
$$
where $\g={d-a\/2a}$. Integrating by parts with respect to $\l$ we
obtain
$$
\begin{aligned}
g(x)=C_d\int_0^\iy \!\!\! d\s\int_{\dS^{d-1}}\!\!\! d\m \int_0^\iy
e^{i(x,\m)\l^{1/a}-i\l(t+\s)}{\z(\l/s)\/\l^{1-\g}}
\rt[{\z'(\l/s)\/i\l}-i\z(\l/s)\g+{(x,\m)\l^{1/a}\/a}
\rt]{d\l\/t+\s}.
\end{aligned}
$$
We rewrite this identity in the form
$$
g(x)=-i\rt(e^{-itH_o}\rt[\z'(H_o/s)+\g
{\z(H_o/s)\/H_o/s}\rt]{1\/s}+{1\/s^\t}\sum_1^dx_n e^{-itH_o}(\na_n
/|\na|){\z(H_o/s)\/(aH_o/s)} \rt)\gF_+^*{1\/\wh \s_++t}\gF_+,
$$
where $x=(x_1,...,x_d)\in \R^d, \na=(\na_1,...,\na_d)$.
From here we obtain
$$
\|\la x\ra e^{-itH_o}\z(H_o/s)\cT^+\|\le C\la t s^\t\ra.
$$
Integrating by parts $m=0,1,....$ times we obtain
$$
\|\la x\ra^{m}e^{-itH_o}\z(H_o/s)\cT^+\|\le C\la t s^\t\ra^{m}.
$$
 Then using the interpolation theorem we get \er{2.4}. \BBox

\no \no $\bu$ {\bf Definition of $\cT_j^\pm$ and $Q_j^\pm$.} {\it
For the operator $h_j=p_j(-i\na_j)$ on $L^2(\R^{d_j})$ from \er{P1}
 we define the operator $\cT_j^\pm, j\in \J_+=\{1,...,j_+\}$ by \er{dTpm1},
 \er{dTpm2} in terms of coordinate $x_j\in\R^{d_j}$.}

Lemma \ref{T2.3} gives that for the operator $h_j=p_j(-i\nabla_j)$,
the function $\vr_j=(1+|x_j|^2)^{-{1\/2}}$ and for bounded operators
$\cT_j^\pm, j\in \J_+$ such that $\cT_j^-+\cT_j^+=\1$  there exist
estimates:
\[
\lb{3.2}   \tes  \qqq\qqq\qqq\qqq \|\vr_j^{\ve_j} e^{\mp ith_j}\cT_j^\pm\|\le C \la
t\ra^{\gr_j},\qqq \gr_j={1\/a_j}\min\{\ve_j, d_j/2\},
\]
\[
\lb{3.3} \| \vr_j^{\ve_j} e^{\mp ith_j}\z(h_j/s)\cT_j^\pm\|\le C \la t \ra^{\ve_j},
\]
\[
\lb{3.4} \|\vr_j^{\ve_j} e^{\mp ith_j}\z(h_j/s)\cT_j^\pm\|\le C \la ts^{\t_j} \ra^{\ve_j},
\]
for any $\ve_j,t>0, s\ge1$, where $\t_j={a_j-1\/a_j}$ and some
constant $C$. Define the operators $Q_j^\pm$ by
$$
\ca Q_j^\pm=\cT_j^\pm, & \qq if \qq \ve_j>{1\/2}\\
Q_j^\pm=\cT_m^\pm \ for\ some \ \ m\in \J_+\sm \{j\}&  \qq if \qq
\ve_j<{1\/2},\qq \gr_m>{1\/2}\ac.
$$
We check main Condition 3 for our operator $H_o$ and the potential
$V$.

\begin{lemma}
\lb{T3.2} i) Let $V\in \dL_q, q>1$ or $V\in \cL_\ve, \ve \in \E_j$
for some $j\in \J_+$. Then
for any $r, s\ge 1, \vt>0$ and any $f\in L^2(\R^d)$ the following
estimate holds true:
\[
\lb{3.6} \int_r^\iy\|Ve^{\mp itH_o}\z(h_j/s)Q_j^\pm f\|dt\le
C_1s^{-\vt} r^{-\vt}\|f\|,
\]
where the constant $C_1=C(V)$ depends on  $V$ only.

\no ii) Let $\ve\in \E_o$. Then for some $j\in \J_\n$ and any $f\in
L^2(\R^d)$ the estimate \er{3.6z} holds true.
\end{lemma}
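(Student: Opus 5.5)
The plan is to exploit the tensor-product structure. Since $H_o=\sum_i h_i$ with commuting $h_i=p_i(-i\na_i)$, we have $e^{\mp itH_o}=\prod_i e^{\mp ith_i}$, and the weight factorizes as $\vr^\ve=\prod_i\vr_i^{\ve_i}$. For $V\in\cL_\ve$ we write $|V|\le C\vr^\ve$. For every factor $i\ne j$ (with $Q_j^\pm=\cT_j^\pm$ in the case $\ve_j>{1\/2}$) we need a decay bound for $\vr_i^{\ve_i}e^{\mp ith_i}$ without the projector $\cT_i^\pm$. For this we use the standard dispersive/imbedding estimate
$$\|\vr_i^{\ve_i}e^{\mp ith_i}\vr_i^{\ve_i}\|\le C\la t\ra^{-2\gr_i}.$$
Because the operator here is not sandwiched, we instead work with $TT^*$: we want to bound $\int_r^\iy\|Ve^{\mp itH_o}\z(h_j/s)Q_j^\pm f\|dt$. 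By Cauchy--Schwarz in $t$ with weight $\la t\ra^{1+\d}$, it suffices to prove the smoothing-type estimate
$$\int_r^\iy \la t\ra^{1+\d}\,\|\vr^{\ve}e^{\mp itH_o}\z(h_j/s)Q_j^\pm f\|^2dt\le C s^{-2\vt}r^{-2\vt}\|f\|^2.$$
The factor $\la t\ra^{-1-\d}$ then integrates to $r^{-\d}$, which gives the $r^{-\vt}$ decay.

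To prove that estimate, we bound the $j$-th factor in operator norm by the estimate \er{3.2} (or \er{3.4}). Thus $\|\vr_j^{\ve_j}e^{\mp ith_j}\z(h_j/s)\cT_j^\pm\|\le C\la ts^{\t_j}\ra^{-\ve_j}$. Here we read the bracket as decaying, i.e. $\la x\ra$ is the reciprocal weight, so the bounds give decay $\la t\ra^{-\ve_j}$ together with an extra gain $s^{-\t_j\ve_j}$ for $ts^{\t_j}\ge1$; this is where the $s^{-\vt}$ comes from. For the remaining variables $i\ne j$, the $L^2_t$ square-function (Kato smoothing) bound in variable $x_i$ contributes $\la t\ra^{-\gr_i}$ in a mean sense. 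Multiplying, we get decay $\la t\ra^{-(\ve_j+\gr-\gr_j)}$, which is integrable exactly under $\ve_j+\gr-\gr_j>1$, the defining inequality of $\E_j$.

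The excluded set in $\E_j$ (the case $\ve_j\le{1\/2}$ and some $\gr_m\le{1\/2}$) corresponds to the case where neither the $j$-th nor any other factor alone gives square-integrable decay beyond $t^{-1/2}$. There we switch to $Q_j^\pm=\cT_m^\pm$ with $\gr_m>{1\/2}$ and put the projector on variable $m$ instead. The same product argument then applies with the roles of $j$ and $m$ exchanged. The $\dL_q$ case is handled as the case $\n=1$, using $\la x\ra^q\le\prod_i\vr_i^{q/\n}$-type comparisons or directly \er{2.2} in all of $\R^d$.

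For part ii), take $j=m$ with $\gr_m>{1\/2}$, drop $\z$, and use \er{3.2} for factor $m$ and the dispersive bounds for the others. The condition $\gr>1$ in $\E_o$ gives integrability of $\la t\ra^{-\gr}$ uniformly down to small $t$, i.e. without any low-energy cutoff, and this yields \er{3.6z} with $C_r\sim r^{1-\gr}$.

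The main obstacle is the mixed-factor step: combining an operator-norm bound in one variable with mean-square-in-time bounds in the others. The product of sup bounds $\la t\ra^{-\gr_i}$ is only available for sandwiched weights. I would first try to prove a weighted $L^2(dt)$ estimate for the partial product via $TT^*$ and Fubini over the variables $i\ne j$, and check that the exponent bookkeeping reproduces exactly $\ve_j+\gr-\gr_j>1$.
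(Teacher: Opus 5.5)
Your treatment of the case $\varepsilon_j>\frac12$ and of part ii) follows the paper's argument. It uses an operator-norm Enss bound, (\ref{3.4}) resp.\ (\ref{3.2}), on the block carrying $\mathcal T^\pm$, a time-weighted $L^2$ smoothing bound on the complementary variables, and Cauchy--Schwarz in $t$. The ``main obstacle'' you flag is precisely (\ref{5.1}) of Lemma~\ref{T5.1}, applied to the weight $\varrho^{\varepsilon}\varrho_j^{-\varepsilon_j}$. It has to be one joint estimate in all remaining variables, since per-variable mean-square bounds cannot be multiplied. Your $TT^*$ idea does deliver it: $\varrho_{(j)}^{\varepsilon}e^{-iuH_o}\varrho_{(j)}^{\varepsilon}$ is a tensor product of sandwiched operators with norm at most $C(1+|u|)^{-2(\mathfrak r-\mathfrak r_j)}$, and a weighted Schur test in $t$ finishes. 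The time weight in that estimate must be decaying, which is exactly why this route needs $\varepsilon_j>\frac12$.

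The genuine gap is the other case, $\varepsilon_j\le\frac12$, where $Q_j^\pm=\mathcal T_m^\pm$ with $\mathfrak r_m>\frac12$. ``Exchanging the roles of $j$ and $m$'' does not reproduce the statement. There is no energy cutoff in $x_m$, so (\ref{3.2}) gives only $(1+t)^{-\mathfrak r_m}$ and no power of $s$, while $\zeta(h_j/s)$ now sits inside the smoothing block. If $x_j$ contributes only $\mathfrak r_j$ there, as in your bookkeeping, you end up needing $\mathfrak r>1$. This is stronger than $\varepsilon_j+\mathfrak r-\mathfrak r_j>1$ because $\mathfrak r_j\le\varepsilon_j/a_j$, and you get no factor $s^{-\vartheta}$ at all.

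The missing ingredient is the high-energy smoothing estimate (\ref{5.2}) of Lemma~\ref{T5.1}(ii). With $x_j$ localized at high energy, the weight $\varrho_j^{\varepsilon_j}$ contributes $\varepsilon_j$ instead of $\mathfrak r_j$, and the localization produces the gain $s^{-\vartheta\varepsilon_j}$. Combined with (\ref{3.2}) and Cauchy--Schwarz, this gives exactly the $\mathbb E_j$ condition.

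Your $TT^*$/dispersive scheme cannot supply this. For an energy-localized variable, $\|\varrho_j^{\varepsilon_j}e^{-iuh_j}\zeta(h_j/s)\varrho_j^{\varepsilon_j}\|$ decays only like $(1+|u|)^{-\varepsilon_j}$ (think of an outgoing packet launched from the origin). The Schur test therefore credits only $\varepsilon_j/2$ and yields $\varepsilon_j/2+\mathfrak r-\mathfrak r_j>1$. The sharp exponent needs a finer argument, which the paper obtains from the momentum-space kernel bound in the proof of Lemma~\ref{T5.1}.

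A smaller point concerns $V\in\mathbb L_q$. Do not spread the decay over the blocks: for $\nu=3$, all $a_i=2$ and $q$ near $1$, the weight $\prod_i\varrho_i^{q/\nu}$ lies outside $\mathbb E_j$. Also, (\ref{2.2}) on all of $\mathbb R^d$ is not available, since $H_o\neq|\Delta|^{a/2}$. Instead use $|V|\le C\varrho_j^{q}$ and (\ref{3.4}), which is integrable on $[r,\infty)$ since $q>1$.
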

\no {\bf Proof.} i) Discuss the case when $V\in \cL_\ve, \ve \in
\E_j$ and the case $"+"$, the proof of other cases  is similar.
Define a function $\vr_{(j)}^{\ve}=\vr^{\ve} \vr_j^{-\ve_j}$, where
$\vr^\ve=\prod_1^\n \vr_j^{\ve_j}$.

 If $\ve_j>{1\/2}$, then we have $Q_j^+=\cT_j^+$. From
\er{3.3}, \er{3.4}  for $\ve_j={1\/2}+\g+2\d, \g>0, 1>\d>0$ and
$\t_j={a_j-1\/a_j}$ and $F(t):=\|\vr_{(j)}^{\ve}e^{- itH_o}f\|$  we
obtain
$$
\int_r^\iy\|\vr^{\ve}e^{- itH_o}\z(h_j/s)Q_j^+ f\|dt\le
\int_r^\iy\|\la x_j\ra^{\ve_j}e^{- ith_j}\z(h_j/s)\cT_j^+\| F(t) dt
$$
$$
\le C\int_r^\iy t^{\d-\ve_j}(ts^{\t_j})^{-\d}F(t)dt
\le Cr^{-\d} s^{-\d\t_j}\int_r^\iy t^{-{1\/2}-\d}   [
t^{-\g}F(t)]dt.
$$
Then using the Schwartz inequality and \er{5.1} we get \er{3.6}
under the condition $\g+\gr-\gr_j>{1\/2}$. The last condition holds
true if $\ve_j>{1\/2}, \ve_j+\gr-\gr_j>1$, i.e., when $\ve\in \E_j$.
Note that the proof for $V\in \dL_q, q>1$   is similar.

If $\ve_j<{1\/2}$, then we have $Q_j^+=\cT_m^+$ and $\gr_m>{1\/2}$
for some  $m\ne j$.  For $\gr_m={1\/2}+\g+2\d, \g>0, \d>0$ and
$F(t):=\|\vr_{(m)}^{\ve}e^{- ith_j}\z(h_j/s)f\|$ using  \er{3.2}, we
obtain:
$$
\int_r^\iy\|\vr^{\ve}e^{- itH_o}\z(h_j/s)Q_j^+ f\|dt\le
\int_r^\iy\|\vr_m^{\ve_m}e^{- ith_m} \cT_m^+\|\cdot F(t)dt
$$$$
\le C\int_r^\iy  t^{-\gr_m} F(t)dt \le C r^{-\d} \int_r^\iy
t^{-\d-{1\/2}}\big [\la t\ra^{\g}F(t)\big]dt.
$$
From here, using the Schwartz inequality and \er{5.2} we get
\er{3.6} under the condition $\g+\ve_j+\gr-\gr_j-\gr_m>{1\/2}$. The
last condition holds true if $\ve_m>{1\/2}, \ve_j+\gr-\gr_j>1$,
i.e., when $\ve\in \E_j$.

ii) Let $\ve\in \E_0$. Then $\gr> 1$ and some $\gr_m>{1\/2}, m\in
\N$.   Thus we have $\gr_m={1\/2}+\g+2\d$ for some $\g>0, \d>0$. It
follows from \er{3.2} that for any $r\ge 1$ and
$F(t):=\|\vr_{(m)}^{\ve} e^{- itH_o}f\|$:
$$
\int_r^\iy\|\vr^{\ve} e^{- itH_o}\cT_m^+f\|dt \le
\int_r^\iy\|\vr_{m}^{\ve_m} e^{- ith_m}\cT_m^+\|\cdot F(t)dt
$$
$$
\le C\int_r^\iy  \la t\ra^{-\gr_m} F(t)dt \le C\la
r\ra^{-\d}\int_r^\iy  \la t\ra^{-\d-{1\/2}} \rt(\la t\ra^{-\g}
F(t)\rt)dt.
$$
Then using the Schwartz inequality and \er{5.1} we obtain
\er{3.6z} under the condition
$\g+\gr-\gr_m>{1\/2}$ and $\gr_m={1\/2}+\g+2\d$. The last condition holds
true if $\gr_m>{1\/2}, \gr>1$, i.e., when $\ve\in \E_0$. \BBox

\no {\bf Proof of Theorem \ref{T1.1}} i)
 Due to Theorem \ref{T2.1} we need to check  Conditions 1,2,3 for
the operator $H_o=P(-i\nabla)$,   the potential $V\in \cL_\ve,
\ve\in \E_{+}$ and an interval $\o\ss \R_+$. The proof of other
cases  is similar.

\no {\bf 1}. In \cite{DKY77} it was proved that $V(H_o-i)^{-1}\in
\B_\iy$.

 \no {\bf 2}. Let a  function  $\vp\in C_0^\iy(\R)$ be  such
that $\vp|_{\o}=1$ and $\supp \vp\ss \R_+$. From the properties of
$P$ we get the existence of the smooth functions $\vp_j, j\in \J_+$
and the number $s>0$ such that
\[
\lb{3.5} \vp(P(k))=\sum_{j\in\J_+}\z(p_j(k_j)/s)\vp_j(k,s).
\]
Using \er{3.2}, \er{3.5} we obtain that Condition 2 holds true for
the operator $\z_j=\z(h_j/s)$.

\no {\bf 3}. The Condition 3 follows from Lemma \ref{T3.2}.
 We consider the spectrum of $H$ at high energy and
check Conditions 3$(\iy)$. From the properties of $P$ we obtain the
decomposition
$$
\z(P(k)/s)=\sum_{j\in\J_+}\z(p_j(k_j)/s)\vp_j(k,s),\qqq
|\vp_j(k,s)|\le 1,
$$
for some smooth functions $\vp_j(k,s), j\in \J_+$.
For any $r\ge 1$ we have  $V=V_1+V_2$, where
\[
\lb{3.7}
\ca V_1=\c(|x|<r)V(x), \qqq |V_1(x)|\le Cr^{2d}\la x\ra^{-D}\\
|V_2(x)|\le C_2(r)\la x\ra^{-\ve},\qqq C_2(r)\to 0 \qq as \qq |x|\to
\iy \ac,
\]
and the multi-index
$D=2(d_j)_1^d\in \R^d$. Here $\c(A)=1, x\in A$ and $\c(A)=0,
x\notin A$ for some set $A$.  Then for the function $f(t,s)=e^{-
itH_o}\z(h_j/s)\cT_j^+f$ from \er{3.3}, \er{3.4}  we obtain
$$
\int_0^\iy\|Vf(t,s)\|dt\le \int_0^\iy\Big(r^{2d}\|\la
x\ra^{-D}f(t,s)\|+C_2(r)\|\la x\ra^{-\ve}f(t,s)\|\Big)dt\le C(s)
\|f\|,
$$
where $C(s)=Cr^{2d}s^{-\vt_j}+C_2(r), \vt_j=2d_j\t_j>0$. Taking
$s=r^{3d/\d}, \d=\min \vt_j$, we deduce  that $C(s)$ is small.

\no ii)  Lemma \ref{T3.2} gives Condition 4. From Theorems
\ref{T2.1}, \ref{T2.2} we obtain  the proof of ii).
 \BBox

\section {The invariance principle and time periodic potentials}
\setcounter{equation}{0}

\subsection {The invariance principle}

In order to prove the invariance principle we need some estimates.
Let $A, B, V$ be self-adjoint operators in the Hilbert space $\mH$,
where $V$ is bounded. For a function $F\in L^1(\R^2)\cap C(\R^2)$ we
define an operator
\[
\lb{4.1} Z(F)=\int_{\R^2} e^{isB}V e^{i\t A}F(s,\t)dsd\t.
\]
If $F(s,\t)=\d(s)F_1(\t)$, where $F_1\in L^1(\R)\cap C(\R)$, then
the corresponding operator is given  by
\[
\lb{4.2} Z(F)=\sqrt{2\pi} \ V \wh F_1(-A).
\]

\begin{lemma}
\lb{T4.1} Let $\la \t\ra^{-b} F(s,\t)\in L^1(\R^2)\cap C(\R^2)$ or
$\la \t\ra^{-b} F_1(\t)\in L^1(\R)\cap C(\R)$ for some $b\ge 0$. Let
$\e(t)=\|  Ve^{-itA}f \|$ for $(t,f)\in \R\ts\mH$. Then for any
$r\ge 0$ the following estimate holds true:
\[
\lb{4.3} \int_r^\iy\|Z(F)e^{-itA}f\|dt\le C\int_{-\iy}^0\la
|t|+r\ra^{b}\e(t)dt+\la
r\ra^{b}\int_{t>0}\e(t)dt+\int_{2t>r}\e(t)dt.
\]
\end{lemma}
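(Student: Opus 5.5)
Here is the plan. Write $Z(F)e^{-itA}f=\int_{\R^2}e^{isB}Ve^{-i(t-\t)A}f\,F(s,\t)\,ds\,d\t$. Since $e^{isB}$ is unitary, the triangle inequality gives
$$
\|Z(F)e^{-itA}f\|\le \int_{\R^2}|F(s,\t)|\,\e(t-\t)\,ds\,d\t=\int_\R G(\t)\,\e(t-\t)\,d\t,
\qqq G(\t):=\int_\R|F(s,\t)|ds .
$$
In the case $F=\d(s)F_1(\t)$ the same holds with $G=|F_1|$, by \er{4.2} written as $V\int e^{i\t A}F_1(\t)d\t$. The hypothesis gives $\la\t\ra^{-b}G\in L^1(\R)$; I would read it as the weight condition $\int\la\t\ra^{-b}$-type control, i.e.\ with the paper's convention $\la x\ra=(1+|x|)^{-1}$, as $(1+|\t|)^{b}G\in L^1$.

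Integrate in $t\in(r,\iy)$ and use Fubini:
$$
\int_r^\iy\|Z(F)e^{-itA}f\|dt\le\int_\R G(\t)\int_{r-\t}^\iy \e(u)\,du\,d\t .
$$
Now split $\t$ into three regions according to where $r-\t$ lies.

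First, for $\t\le r/2$ we have $r-\t\ge r/2$. This part is bounded by $\|G\|_{L^1}\int_{2u>r}\e(u)du$, which is the third term.

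Second, for $r/2<\t$ the lower limit $r-\t$ may be negative, so the inner integral splits into $\int_{u>0}\e$ and $\int_{r-\t<u<0}\e$. The positive half contributes at most $\int_{\t>r/2}G(\t)d\t\cdot\int_{t>0}\e$. Since $\t>r/2$, we have $1\le C(1+|\t|)^b(1+r)^{-b}$. Thus this term is bounded by $C\la r\ra^{b}\int_{t>0}\e$ in the paper's notation, which is the second term.

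Third, the negative half is handled by interchanging the integrals once more:
$$
\int_{-\iy}^0\e(u)\int_{\t>r-u}G(\t)\,d\t\,du .
$$
For $u<0$ we have $\t>r+|u|$, so $\int_{\t>r+|u|}G\le C(1+r+|u|)^{-b}\int(1+|\t|)^bG$. This gives the first term, $C\int_{-\iy}^0\la|t|+r\ra^{b}\e(t)dt$.

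The main obstacle is purely bookkeeping. One has to fix the convention for the weight $\la\cdot\ra$, since the paper uses $(1+|x|)^{-1}$, and check that the tail bound $\int_{\t>R}G\le C\la R\ra^{b}$ follows from the stated integrability for $b\ge0$. If the hypothesis is instead read literally as $\la\t\ra^{-b}G\in L^1$, I would weaken the goal accordingly and put the weight on $\e$. The other point to verify is Fubini, which is justified by positivity of the integrands.
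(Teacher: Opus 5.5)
Your proposal is correct and follows the paper's proof. Both arguments use the same bound $\|Z(F)e^{-itA}f\|\le\int_{\mathbb R} g(\tau)\eta(t-\tau)\,d\tau$ with $g(\tau)=\int|F(s,\tau)|\,ds$, then Fubini, the tail estimate $\int_{z}^{\infty}g\le C(1+z)^{-b}$ for $z\ge 0$, and a split at $r/2$. Your reading $\langle x\rangle=(1+|x|)^{-1}$ is the right one, so the fallback you mention is not needed. The only difference is cosmetic: the paper introduces $w(z)=\int_z^\infty g$ and splits in $u$ (regions $u<0$, $0<u<r/2$, $u>r/2$), whereas you split first in $\tau$ ($\tau\le r/2$ versus $\tau>r/2$, then by the sign of $u$), and this yields the same three terms.
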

\no {\bf Proof.} Consider $F\in L^1(\R^2)\cap C(\R^2)$, the
proof for another case is similar. It follows from \er{4.1} that
$$
\begin{aligned}
\|Z(F)e^{-itA}f\|=\|  \int_{\R^2}F(s,\t) e^{isB}V e^{i(\t-t) A}dsd\t
f\|
\\
\le \int_{\R^2} |F(s,\t)| \e(t-\t)dsd\t= \int_{\R} g(\t)\e(t-\t)d\t,
\end{aligned}
$$
where $g(\t):=\int_{\R} |F(s,\t)|ds$. From here we obtain
\[
\lb{4.4} G_r=\int_r^\iy\|Z(F)e^{-itA}f\|dt\le \int_r^\iy dt \int_\R
g(\t)\e(t-\t)d\t=\int_\R  \f(u) \int_r^\iy  g(t-u)dt,
\]
where
$$
w(z):=\int_r^\iy  g(t-u)dt=\int_0^\iy  g(t+z)dt=\int_0^\iy dt\int_\R |F(s,t+z)|ds, \qq z:=r-u.
$$
 From the properties of $F$ we obtain
\[
\lb{4.5} w(z)\le C\ca \la z\ra^{b}& , z>0
\\
                      1, & , z<0 \ac.
\]
 From  \er{4.4}, \er{4.5} we get
 $$
G_r\le C \int_{-\iy}^r \e(u)\la r-u \ra^{b}du+ C \int_r^\iy \e(u)du
 $$
$$
\le C \int_{-\iy}^0\e(u)\la r+|u| \ra^{b}du+ C \int_0^r \e(t)\la r-u
\ra^{b}du + C \int_{r}^\iy \e(u)du.
$$
Using this and the estimate
$$
\int_{0}^r \e(u)\la r-u \ra^{b}du\le \int_{0}^{r\/2} \e(u)\la r-u
\ra^{b}du+\int_{r\/2}^{r} \e(t)\la r-u \ra^{b}du
$$
$$
\le C\la r\ra^{b}\int_{0}^{\iy} \e(t)dt+C\int_{r\/2}^{\iy} \e(t)dt
$$
we obtain \er{4.3}. \BBox

We prove the important result about  the operators $T_0=\gf(H_o)$ and $
T=T_0+V$.

\begin{theorem}
\lb{T4.2} ({\bf Invariance principle.}) Let operators $T_o=\gf(H_o),
T=T_0+V$, where  a function $\gf$ satisfy Condition  IP and let
$H_o, V$ satisfy Condition 1-3. Define operators $A, A_o$ by
\[
\lb{4.6} A=\f(T),\qqq A_o=\f(T_o)=H_oE(H_o,\o)
\]
where  $\f$ is the inverse function for $\gf: \o\to \O$. Suppose
that the wave operators $W_\pm (T, T_o,\O)$ and $W_\pm (A, A_o,\o)$
exist and satisfy
$$
W_\pm (T, T_o,\O)=W_\pm (A, A_o,\o)
$$
Then the wave operators $W_\pm (T, T_o,\O)$ are complete,
$\s_{sc}(T)\cap \O=\es$ and eigenvalues of $T$,
 belonging to $\O$ can accumulate  only at the ends of the interval $\O$.
\end{theorem}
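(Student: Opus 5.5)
The plan is to reduce everything to the Enss-type argument of Theorem \ref{T2.1}, applied not to the pair $T_o,T$ directly but to the pair $A_o=\f(T_o)=H_oE(H_o,\o)$, $A=\f(T)$ on the interval $\o$. The hypothesis $W_\pm(T,T_o,\O)=W_\pm(A,A_o,\o)$ transfers completeness and the spectral conclusions back to $T$. Since $\f:\O\to\o$ is a $C^m$-diffeomorphism with $\gf'>c$, we have $E(T,\f^{-1}(\cdot))=E(A,\cdot)$ on $\O$. Hence $\mH_{ac}(T)E(T,\O)=\mH_{ac}(A)E(A,\o)$, the singular continuous parts correspond, and eigenvalues of $T$ in $\O$ are images under $\gf$ of eigenvalues of $A$ in $\o$. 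Accumulation only at the ends of $\o$ thus becomes accumulation only at the ends of $\O$. So it suffices to verify, for $A_o,A$, the ingredients used in the proof of Theorem \ref{T2.1}.

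\textbf{Step 1 (Condition 1 and \er{co}).} For $\vp\in C_0^\iy(\R)$ supported near $\o$, set $\p=\vp\circ\f\in C_0^m$, supported near $\O$. Then $\vp(A)-\vp(A_o)=\p(T)-\p(T_o)$. Since $V(H_o-i)^{-1}\in\cB_\iy$, Condition IP\,1) ($|\gf|\to\iy$) gives $V(T_o-i)^{-1}\in\cB_\iy$. The resolvent identity then yields $\p(T)-\p(T_o)\in\cB_\iy$. Condition 2 for $A_o$ holds with the same $\z_j,\vp_j,Q_j^\pm$ as for $H_o$, because $\vp(A_o)=\vp(H_o)E(H_o,\o)$ and $e^{itA_o}$ agrees with $e^{itH_o}$ on the range of $\vp(H_o)$.

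\textbf{Step 2 (analogue of \er{Kpm}).} I need $\vp(A)(W_\pm(A,A_o,\o)-\1)\z_jQ_j^\pm\in\cB_\iy$. I would write
\[
W_+-\1=i\int_0^\iy e^{itA}(A\vp_1(A)-\vp_1(A)A_o)e^{-itA_o}\,dt
\]
on the relevant spectral subspace, with $\vp_1\in C_0^\iy$ equal to $1$ near $\supp\vp$. The key is to express the perturbation $A\vp_1(A)-\vp_1(A)A_o=\g(T)-\g(T_o)$, where $\g(\m)=\f(\m)\vp_1(\f(\m))\in C_0^m$, through $V$. Using the Helffer–Sjöstrand formula, or the Fourier representation $\g(T)-\g(T_o)=i\int\wh\g(\t)\int_0^\t e^{isT}Ve^{i(\t-s)T_o}\,ds\,d\t$, this is an operator of the form $Z(F)$ in \er{4.1} with $B=T$, $A\to T_o$ and $|F(s,\t)|\le|\wh\g(\t)|\c_{[0,\t]}(s)$. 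Since $\gf\in C^m$, $m>3$, the function $\la\t\ra^{-b}F$ is integrable for $b$ around $1$, for instance $b<m-2$.

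\textbf{Step 3 (Condition 3 transferred).} By Lemma \ref{T4.1}, with $\e(t)=\|Ve^{-itT_o}\z_jQ_j^+f\|$,
\[
\int_r^\iy\|Z(F)e^{-itT_o}\z_jQ_j^+f\|dt
\]
is bounded by weighted integrals of $\e$. On $\Ran\vp(H_o)$ we have $e^{-itT_o}=e^{-i\gf(H_o)t}$, and $\gf$ is a smooth reparametrization. The estimates of Lemma \ref{T2.3} and \er{3.6} then hold with $e^{-itH_o}$ replaced by $e^{-itT_o}$: the phase $\gf(\l)t$ has the same nondegeneracy since $\gf'>c$, so the same integration by parts gives decay $r^{-\vt}$. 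This also covers the negative times and the polynomial weight $\la|t|+r\ra^b$ appearing in \er{4.3}. The resulting integral is small as $r\to\iy$, which gives compactness as in \er{Kpm}.

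With Steps 1–3, the contradiction argument of Theorem \ref{T2.1}\,i) runs verbatim for $A,A_o$. The main obstacle is Step 3. Lemma \ref{T4.1} produces the term $\int_{-\iy}^0\la|t|+r\ra^b\e(t)\,dt$, which requires control of $Ve^{-itT_o}\z_jQ_j^+f$ at negative times, where $Q_j^+$ gives no outgoing decay. I would try first to split $F$ so that the $\t<0$ part is handled with $Q_j^-$ and absorbed as a compact error; failing that, I would exploit the rapid decay of $\wh\g$ (from $m>3$) to take $b$ small and interpolate.
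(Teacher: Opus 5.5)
Your overall reduction is the paper's: run the Enss argument of Theorem \ref{T2.1} for the pair $A_o,A$ on $\omega$, then transport the conclusions back to $T$ through $\phi$. Your Step 1 and your verification of Condition 2 also agree with the paper. The gap is in Steps 2--3, in how you express the effective perturbation between $A$ and $A_o$.

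Your Duhamel (or Helffer--Sj\"ostrand) representation gives $Z(F)=\int\int F(s,\tau)e^{isT}Ve^{i\tau T_o}\,ds\,d\tau$, with the group of $T_o$ to the right of $V$. The Cook integral for $W_\pm(A,A_o,\omega)$, however, carries the evolution $e^{-itA_o}=e^{-it\phi(T_o)}$. Lemma \ref{T4.1} applies only when these two groups have the same generator: its proof uses $Ve^{i\tau A}e^{-itA}=Ve^{-i(t-\tau)A}$ to get a convolution in $t-\tau$. The product $e^{i\tau T_o}e^{-itA_o}$ does not collapse this way, and in Step 3 you silently replace $e^{-itA_o}$ by $e^{-itT_o}$.

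Even granting that replacement, you then need Condition 3 for $e^{-it\mathfrak f(H_o)}$, namely $\int_r^\infty\|Ve^{-it\mathfrak f(H_o)}\zeta_jQ_j^+f\|\,dt\le C_r\|f\|$. This is not a hypothesis. It also does not follow from Lemma \ref{T2.3} and \er{3.6} by ``the same integration by parts'', for two reasons:
(a) $\mathfrak f(H_o)=\mathfrak f(\sum_j h_j)$ is not a sum of commuting one-variable operators, so the factorization $e^{-itH_o}=\prod_j e^{-ith_j}$ is lost; Lemma \ref{T3.2} relies on it to pair $\varrho_j^{\varepsilon_j}e^{-ith_j}\mathcal T_j^\pm$ with the remaining weight.
(b) $\mathcal T_j^\pm$ are built from the spectral representation of $h_j$, not of $\mathfrak f(H_o)$.
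Moreover, if such estimates were available, Theorem \ref{T2.1} would apply directly to $T_o,T$ with the bounded perturbation $V$. Avoiding them is the whole purpose of passing to $A,A_o$.

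The missing idea is to Fourier-transform in the spectral variables of $A_o,A$, not of $T_o,T$. The paper proceeds in four moves:
\begin{enumerate}
\item It writes $X=\varphi(A)(\eta(A)-\eta(A_o))\varphi(A_o)=u(T)(v(T)-v(T_o))u(T_o)$ as the double operator integral \er{4.10} with respect to $T,T_o$, so only $T-T_o=V$ enters.
\item It substitutes $y_1=\phi(\lambda)$, $y_2=\phi(\mu)$. The kernel becomes $g(y)=\varphi(y_1)\varphi(y_2)\frac{\eta(y_2)-\eta(y_1)}{\mathfrak f(y_2)-\mathfrak f(y_1)}$, which is smooth because $\mathfrak f\in C^m$ and $\mathfrak f'>c$.
\item Fourier-transforming $g$ gives \er{4.13}: $X=(2\pi)^{-1}\int\int\hat g(\tau)e^{i\tau_2A}Ve^{i\tau_1A_o}\,d\tau$.
\item Now $A_o$ stands to the right of $V$, so Lemma \ref{T4.1} applies with $\|Ve^{-itA_o}\varphi(A_o)\zeta_jQ_j^+f\|=\|Ve^{-itH_o}\varphi(H_o)\zeta_jQ_j^+f\|$. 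This is controlled by Condition 3 for $H_o$, exactly as assumed.
\end{enumerate}

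The ``main obstacle'' you single out is not one. In this paper $\langle x\rangle=(1+|x|)^{-1}$ is a decaying weight. So the hypothesis of Lemma \ref{T4.1} is the moment condition $(1+|\tau|)^bF\in L^1$ with $b>1$; this is where $m>3$ enters, via \er{4.12}. The negative-time term in \er{4.3} is then bounded by $C\|V\|\|f\|\int_0^\infty(1+t+r)^{-b}dt=O((1+r)^{1-b})$, using only the trivial bound on $\|Ve^{-itA_o}\cdot\|$. No decay at negative times is needed, and taking $b$ small, as you suggest, would destroy this bound.

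A smaller point: $A\varphi_1(A)-\varphi_1(A)A_o$ is not $\gamma(T)-\gamma(T_o)$. The two differ by $(\varphi_1(A_o)-\varphi_1(A))A_o$, which has to be carried along.
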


\no {\bf Proof.} We check Conditions 1-3 for $T, T_o$.

\no 1) The operator $V(H_o-i)^{-1}$ is compact and by assumption,
$h(\l)\to \iy$ as $|\l|\to \iy, \l\in \G$. This gives
\[
\lb{4.7} V(T_o-i)^{-1}\in \B_\iy.
\]
Using \er{co} and \er{4.7} we obtain
\[
\lb{4.8} \e(A)-\e(A_o)= \e(\f(T))-\e(\f(T_o))\in \B_\iy,
\]
for any $\e\in C_0^\iy(\R)$, which yields Condition 1.

2) Recall that $\o_1\ss \o$ is a close interval and smooth function
$\vp$ satisfies: $\vp|_{\o_1}=1$. Using Lemma \ref{2.1} and \er{4.6}
we obtain
\[
\lb{4.9} \vp(A_o)=\vp(H_o)=\sum_1^\n \z_j\vp_j.
\]
The operator $H_o$ satisfies Condition 2. Then from \er{4.9} we
deduce that the operator $A_o$ satisfies Condition 2.

3) We check the main Condition 3 for $A, A_o, \o$. Introduce the
sufficiently smooth functions
$$
w(\l,\m)=u(\l)u(\m){v(\m)-v(\l)\/\m-\l}, \qq u(\l)=\vp(\f(\l)),\qqq
v(\l)=\e(\f(\l)),\qq  \l,\m\in \o,
$$
and an operator
$$
X=u(T)\big(v(T)-v(T_o)\big)u(T_o).
$$
By the theory of double operator-valued integrals \cite{BS73}, we
obtain
\[
\lb{4.10} X=\iint w(\l,\m)dE_\m V E_\l^o,
\]
where $E_\m, E_\m^o$ are the spectral projector for the operators
$T, T_o $.  Introduce smooth functions
$$
g(\l,\m)=\vp(y_2)\vp(y_1){\e(y_2)-\e(y_1)\/\gf(y_2)-\gf(y_1)}, \qqq
y=(y_1,y_2)\in \O^2.
$$
We have the identity
$$
w(\m,\l)=g(y),\qqq y_1=\f(\l),\qqq y_2=\f(\l),\qqq \l,\m\in \o, \qqq
y\in \O^2.
$$
From the assumption $\gf\in C^{3+\d}(\o), \d>0$ we obtain
\[
\lb{4.11} g\in W_\vt^2(\R^2),\qqq \vt=(\vt_1,\vt_2)\in \R^2,\qqq
\vt_1, \vt_2\ge 0,\qq \vt_1+\vt_2=2+\d.
\]
Here $W_\vt^2(\R^2)$ is the Sobolev space of functions $f(y)$ such
that $\la \t\ra ^{-\vt}\wh f(\t)\in L^2(\R^2), \t\in \R^2$. In
\er{4.11} we take $g$ such that
\[
\lb{4.12} \la \t_1\ra ^{-b}\wh g(\t)\in L^2(\R^2),\qqq
b=\vt_2-\vt_1>1.
\]
Substituting the Fourier integral $g(y)={1\/2\pi}\iint e^{iy\t}\wh
g(\t)d\t$ into the integral \er{4.10} we obtain
\[
\lb{4.13} X=Z(\wh q)={1\/2\pi}\iint \wh g(\t)e^{iA_1\t_2}
Ve^{i\t_1A_0}d\t.
\]
For $f\in \mH, \|f\|=1$, we introduce the function
$$
\gf_j(t)=\|Ve^{-itH_o}\vp(H_o)\z_j(s)Q_j^+ f \|,\qqq j\in \J, \qq
t\in\R.
$$
Using \er{4.12}, \er{4.6}, \er{4.3} we obtain
$$
\int_r^\iy\|V e^{- itA_0}\vp(A_0)\z_j(s)Q_j^+ f\|dt \le \int_r^\iy
\gf_j(t)dt
$$
$$
\le C\int_\R\la |t|+r\ra^{b}\gf_j(t)dt+C \la r\ra^{b}\int_0^\iy
\gf_j(t)dt+C\int_r^\iy \gf_j(t)dt.
$$
Thus it is enough to estimate  the first integral from the RHS:
$$
\int_\R\la |t|+r\ra^{b}\gf_j(t)dt\le C\int_\R\la |t|+r\ra^{b}dt\le
C\la r\ra^{b-1}.\qqq  \qqq \qqq \BBox
$$
 Let $A$ be a self-adjoint operator and let $E(\l)$ be
the corresponding spectral projector. Denote by $\mM(A)$ the set of
all $f\in \mH$ such that $d(E(\l)f,f)=|\gp(\l)|^2d\l$ for some
$\gp\in L^\iy(\R)$. Below we need the well known invariance
principle from \cite{CG76} and \cite{RS79} (Theorem XI.23).

\begin{lemma}
\lb{T4.3} Let $T_o, T_1$ be  self-adjoint operators acting in the
Hilbert space $\mH$. Let a function $\gf\in C^2(\O)$ for some
bounded interval $\O\ss \R$  satisfy:

\no 1) $\gf '\ge \a$ on $\O$ for some constant $\a>0$.

\no 2) Let $\cI\ss \O$ be any close interval and let $\mD$ be a
dense set in $E(T_o, \cI)P_{ac}(T_o)\mH, \mD\ss \mM(T_o)$. For any
$u\in \mD$ the function $w(t)=e^{itT}e^{-itT_o}u, t\in \R$ is
strongly-differentiated and
\[
\lb{4.14} \|w'(\cdot)\|\in L^2(\R),\qqq
 \la t\ra^{-\d}\|w'(\cdot)\|\in L^1(\R),\qq {\rm for\ some}\ \d>0.
\]
Then there exist  wave operators
$$
W_\pm(\f(T), \f(T_o))u=s\lim e^{it\f(T)}e^{-it\f(T_o)}u\qqq \as \qq
t\to \pm\iy,
$$
and $W_\pm(\f(T), \f(T_o))u=W_\pm(T,T_o)u$.
\end{lemma}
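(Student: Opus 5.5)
The plan is the standard Kato--Birman route to the invariance principle. Everything is reduced to the behaviour of $w(t)=e^{itT}e^{-itT_o}u$ for $u$ in a dense subset of $E(T_o,\cI)P_{ac}(T_o)\mH$. We first prove that $W_\pm(T,T_o)$ exists on such $u$. Then, by a stationary-phase argument in the spectral representation of $T_o$, we show that $e^{it\f(T)}e^{-it\f(T_o)}u$ has the same limit.

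\emph{Step 1 (dense set).} Choose $\mD\subset \mM(T_o)$ to consist of vectors $u=\eta(T_o)u$ with $\eta\in C_0^\iy(\cI^\circ)$ whose spectral density $\gp$ is smooth and compactly supported inside $\cI$. Such vectors are dense in $E(T_o,\cI)P_{ac}(T_o)\mH$.

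\emph{Step 2 (existence of $W_\pm(T,T_o)u$).} By hypothesis, $w'(t)$ exists strongly. From $\la t\ra^{-\d}\|w'\|\in L^1$ and $\|w'\|\in L^2$ we want $\int\|w'\|dt<\iy$. The honest route is Cook's method, $\|w(t)-w(s)\|\le\int_s^t\|w'\|$. If only the weighted integrability is available, use the Cauchy--Schwarz splitting $\|w'\|=\la t\ra^{-\d/2}\|w'\|^{1/2}\cdot\la t\ra^{\d/2}\|w'\|^{1/2}$ combined with the $L^2$ bound. This is the step I expect to need the most care: the two conditions in \er{4.14} must be combined to get Cauchy convergence of $w(t)$. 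If the combination falls short, fall back on the Kato--Birman approach: $\|w'\|\in L^2$ gives $T_o$-smoothness type estimates, so existence of $s$-$\lim w(t)$ follows via Cook applied on dyadic time intervals, with the $L^1$-weighted condition controlling the tails.

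\emph{Step 3 (passage to $\f$).} Write
$$
e^{it\f(T)}e^{-it\f(T_o)}u - W_+u = e^{it\f(T)}\bigl(e^{-it\f(T_o)}u - e^{-i\tau T}W_+ e^{i\tau T_o}e^{-it\f(T_o)}u\bigr)+o(1).
$$
Using the intertwining relation $\f(T)W_+=W_+\f(T_o)$, it suffices to show
$$
\|(W_+-w(s))e^{-it\f(T_o)}u\|\to 0
$$
uniformly along the relevant family as $t\to\iy$. Equivalently, following \cite{RS79} (Theorem XI.23), show that
$$
\int_0^\iy \|w'(s)e^{-it\f(T_o)}u\|\,ds \to 0 .
$$
In the spectral representation, $e^{-isT_o}e^{-it\f(T_o)}u$ has density $e^{-i(s\l+t\f(\l))}\gp(\l)$. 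Since $\f'\ge\a>0$, nonstationary phase (integration by parts, using $\f\in C^2$) shows that this vector concentrates, in the time variable conjugate to $T_o$, near $s\approx -t\f'(\l)$. This set is very negative for $t\to+\iy$, so the forward Cook integral sees only tails. The tail is controlled by the weighted $L^1$ condition via Lemma~\ref{T4.1}-type convolution bounds, the error in the stationary-phase region is $O(t^{-1})$ times moments of $\gp$, and the $L^2$ condition handles the cross terms.

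\emph{Step 4 (conclusion).} Density of $\mD$ and uniform boundedness of the operators extend the equality $W_\pm(\f(T),\f(T_o))u=W_\pm(T,T_o)u$ to the closure. The case $t\to-\iy$ is symmetric.
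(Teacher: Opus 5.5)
The paper gives no proof of this lemma; it is quoted from Chandler--Gibson and Reed--Simon (Theorem XI.23). So your argument has to stand on its own, and its decisive step is asserted rather than proved.

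Your reduction is right. By intertwining, $e^{it\phi(T)}e^{-it\phi(T_o)}u-W_+u=e^{it\phi(T)}(\1-W_+)e^{-it\phi(T_o)}u$ holds exactly. (In your first display the $\tau$ is undefined, and $w(s)$ is a vector, not the operator $e^{isT}e^{-isT_o}$.) Everything therefore reduces to showing $\|(W_+-\1)e^{-it\phi(T_o)}u\|\to0$.

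The difficulty is that \er{4.14} controls $\|Ve^{-isT_o}u\|$ only for $u\in\mD$, not for $e^{-it\phi(T_o)}u$. To use it you must write $e^{-it\phi(T_o)}u=(2\pi)^{-1/2}\int\widehat h_t(\tau)e^{-i\tau T_o}u\,d\tau$, with $h_t=e^{-it\phi}\chi$ and $\chi=1$ on $\cI$. You then use $(W_+-\1)e^{-i\tau T_o}u=e^{-i\tau T}(W_+u-w(\tau))$. Two problems arise that your sketch does not address.

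\emph{(i) Off the stationary window} $\tau\in[\alpha t/2,Ct]$, one integration by parts with $\phi\in C^2$ gives only $|\widehat h_t(\tau)|\lesssim(t+|\tau|)^{-1}$, which is not integrable. Meanwhile $\|W_+u-w(\tau)\|\to\|W_+u-W_-u\|$ as $\tau\to-\infty$, and this limit is in general nonzero. So an $L^1$ convolution bound in the spirit of Lemma~\ref{T4.1} gives no smallness here. This piece must be estimated in norm, using three facts:
\begin{itemize}
\item $u\in\mM(T_o)$, which gives $\|\int b(\tau)e^{-i\tau T_o}u\,d\tau\|\le\sqrt{2\pi}\,\|\gp\|_\infty\|b\|_{L^2}$;
\item $\|\widehat h_t\|_{L^2}=O(t^{-1/2})$ off the window;
\item $\|W_+-\1\|\le 2$.
\end{itemize}

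\emph{(ii) On the window}, only $\|\widehat h_t\|_{L^2}=\|\chi\|_{L^2}$ is bounded. When $\phi''\neq0$, $\|\widehat h_t\|_{L^1}$ is of order $t^{1/2}$. The Cook bound $\int|\widehat h_t(\tau)|\,\|W_+u-w(\tau)\|\,d\tau$ is therefore only $O\bigl(t^{1/2}\int_{\alpha t/2}^\infty\|w'(\sigma)\|\,d\sigma\bigr)$. This tends to zero if $(1+|t|)^{1/2}\|w'(t)\|\in L^1$, but not under the stated hypothesis with small $\delta$. For example, $\|w'(\sigma)\|\sim\sigma^{-\beta}$ with $1+\delta<\beta\le 3/2$ satisfies both conditions in \er{4.14}, yet the bound does not go to zero.

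Your claims that the stationary-phase error is $O(t^{-1})$ ``times moments of $\gp$'' and that ``the $L^2$ condition handles the cross terms'' are precisely the missing estimate. You need either more decay or a genuine cancellation argument, for instance uniform bounds on the truncated multipliers $\int_{\alpha t/2}^{\sigma}\widehat h_t(\tau)e^{-i\tau\mu}\,d\tau$. You supply neither, and you never say where $\|w'\|\in L^2$ actually enters.

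Steps 1 and 2 also need repair.

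\emph{Step 2.} In this paper $\langle t\rangle=(1+|t|)^{-1}$ (see Lemma~\ref{T2.3}). So $\langle t\rangle^{-\delta}\|w'\|\in L^1$ means $(1+|t|)^{\delta}\|w'(t)\|\in L^1$. Hence $\|w'\|\in L^1$, and existence of $W_\pm(T,T_o)u$ is a one-line Cook argument. Under the decaying-weight reading you seem to fear, no Cauchy--Schwarz splitting or dyadic argument could give existence. Indeed $\|w'(t)\|=(1+|t|)^{-1}$ lies in $L^2$ and in every such weighted $L^1$, but not in $L^1$.

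\emph{Step 1.} You are not free to choose $\mD$: \er{4.14} is assumed only on the given set. You can transfer \er{4.14} to $\eta(T_o)u$ by writing $\eta(T_o)u=(2\pi)^{-1/2}\int\widehat\eta(s)e^{-isT_o}u\,ds$ and convolving. But you cannot make the spectral density smooth this way. Any argument using derivatives or moments of $\gp$ is therefore unavailable; the only usable information from $\mD\subset\mM(T_o)$ is $\gp\in L^\infty$.
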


\no {\bf Proof of Theorem \ref{T1.2}.} Consider the operator
$H_o=P(-\nabla)$ acting on $\mH=L^2(\R^d), d>1$ and a potential
$V\in \cL_\ve, \ve\in \E_\pm$, the proof of other cases is similar. Recall that a real function $\gf$ satisfies Condition IP
and the interval $\O=\gf(\o)$ where the interval $\o\ss P(\R^d)$. We
apply Lemma \ref{4.3} to the following case: the operator
$T_o=\gf(H_o), T=T_o+V$ and $\f$ is the inverse function of
$\gf(\l), \l\in \o$. Using the stationary phase method we prove
\er{4.14} for the pair $T_o, T$.  From here and from Lemma
\ref{T4.3} we obtain the existence and the identity $W_\pm(\f(T),
\f(T_o))=W_\pm(T,T_o)$. Then from Theorem \ref{T4.2} we get the
proof of Theorem \ref{T1.2}. \BBox

\subsection {Time depending potentials}

We discuss time-decaying potentials.

\no {\bf Proof of Theorem \ref{Tt2}.} Let $\ve\in \E_o$. Note that
the existence of the wave operators is a simple fact due to the
stationary phase method \cite{RS79}. From \er{tt1}, \er{5.1} we
obtain
$$
\begin{aligned}
\|U(0,t)e^{-itH_o}f-f\|^2\le \Big(\int_0^t
\|V_se^{-isH_o}f\|ds\Big)^2\le \Big(\int_0^t g(s)\|\vr^\ve
e^{-isH_o}f\|ds\Big)^2
\\
\le \int_0^\iy  \la s\ra^{-2\g} g^2(s)ds \int_0^\iy \la s\ra^{2\g}
\|\vr^\ve e^{- isH_o}f\|^2ds\le C_1 G(0)\|f\|^2,
\end{aligned}
$$
where $C_1$ does not depend on $f$ and $G(t)=\int_t^\iy  \la s\ra^{-2\g} g^2(s)ds$.
Similar arguments imply for a large time
$$
\begin{aligned}
\|\cW_+f-U(0,t)e^{-it\D}f\|^2\le
\Big(\int_t^\iy   g(s)\|\vr^\ve e^{-isH_o}f\|ds\Big)^2
\\
\le  G(t)
\int_0^\iy   \la s\ra^{2\g}  \|\vr^\ve e^{- isH_o}f\|^2ds\le C_1G(t)\|f\|^2,
\end{aligned}
$$
where $G(t)=o(1)$ as $t\to\iy$. This gives the norm convergence as
$t\to\iy$ and then the wave operator  $\cW_+$ is unitary. The proof
for $\cW_-$ is similar. \BBox

We discuss   time-periodic potentials.

\no {\bf Proof of Theorem \ref{Tt1}.} Consider  $V\in \cL_\ve,
\ve\in \E_+\cap \E_-$, the proof of other cases is similar.
Note that the existence of the wave operators is a simple fact and
it is established by the stationary phase method \cite{RS79}. We
have $\cW_\pm\ss \mH_{c}(M)$, where $M=U(1,0)$. We show that
$\cW_+=\mH_{c}(M)$,  the proof for $"-"$ is similar. The proof is
based on the contradiction. Thus we can assume that $E(\gu,M)f=0$
for some small interval $\gu\ss \dS: =\{|\l|=1\}$, such that $1\in
\gu$, including its small neighborhood and let $ \gu\Subset\gu_1\ss
\dS $ where  for some $\gu_1$. Let $0\le \vt\in C^\iy(\dS)$   be
such that $\vt|_{\gu}=1$ and $ \vt|_{\dS\sm\gu_1}=0$. Let $\e\in
C_o^\iy(\R)$, where $\e=1$ including  small neighborhood of the
point $0$. Define functions $\e_1$ and $\e_2$ by
\[
\lb{te}
\begin{aligned}
\e_1=\vt(e^{i\l}) \e(\l)\in C_o^\iy(\R),\qqq  \e_2=1-\e_1, \qqq
\l\in \R.
\end{aligned}
\]
We  recall the standard fact:
 for any $(\vt, \vp)\in C(\dS)\ts C_0^\iy(\R)$,  we obtain
\[
\lb{tc1} \vp(H_o)(\vt(M)-\vt(M_o))\in \B_\iy,
\]
 see e.g. \cite{Y82}. Recall the well known fact:
 for any finite number of compact operators $\cG_j\in \B_\iy, j\in \J_\n$
   there exist sequences of integers $n_p$ such that
$n_p\to \pm \iy $ as $p\to \pm \iy$:
\[
\lb{mn} \sum_{j=1}^\n\|\cG_j M^{n_p}f\| =o(1)\qqq \as \qq p\to  \pm
\iy,
\]
see \cite{E78}, \cite{Y82}. Then for some subsequence $n:=n_p\to
\iy$ and for $f_n=M^nf$ we have
\[
\lb{te1}
\begin{aligned}
& \|f\|^2=\|f_n\|^2=(f_n,\e_1(H_o)f_n)+(f_n,\e_2(H_o)f_n),
\\
& \e_1(H_o)f_n= \e(H_o)\vt(H_o)f_n=\e(H_o)(\vt(M_o)-\vt(M))f_n=o(1)
,
\end{aligned}
\]
 since $E(\gu,M)f=\vt(M)f=0$ and due to \er{tc1} we obtain $ \e(H_o)(\vt(M)-\vt(M_o))\in \B_\iy $,
for some sequences of integers $n_p$ such that $n_p\to \pm \iy $ as
$p\to  \iy$.  We have the decomposition
 $$
 \e_2=\f+\f_-,\qq \f =\c_+\e_2\in C^\iy(\R), \ \ {\rm and} \ \f_-=\c_-\e_2,
 $$
where  $\c_+(t)=1, t>0$ and  $\c_+(t)=0, t<0$ and $\c_++\c_-=1$.
Here we have a problem with $\f$, since $V_t\f$ is not a compact
operators in $L^2(\R^d)$. We need some modification. We consider the
first term $\f(H_o)f_n$, the proof for $\f_-(H_o)f_n$ is similar. We
have  the decomposition $\f=\z_1\f+\gb_1\f$, where $\z_1=\z(h_1/s)$
and $\gb_1=1-\z_1$. Let $A_1^\pm :=(\cW_\pm-\1)\z_1Q_1^\pm$. Due to
\er{3.4}, \er{3.6} we have $\|A_1^\pm\|=o(1)$ as $s\to \iy$. Then we
have
 \[
\lb{cc3}
\begin{aligned}
& (f_n,\f f_n)=(f_n,\z_1\f f_n,)+(f_n,\gb_1\f f_n),
\\
& (f_n,\z_1\f f_n)=\sum_{\t=\pm}[(f_n,\cW_\t\z_1 Q_1^\t \f
f_n)-(f_n, A_1^\t f_n)]
\\
&=\sum_{\t=\pm}(f_n,\cW_\n\f_jQ_j^\n f_n)+o(1)
=(f_n, \cW_-\f_jQ_j^- f_n)+o(1)\qqq \as \qq n=n_p\to \pm\iy,
\end{aligned}
\]
since $f\in \mH_{ac}(M)\om \cW_+\mH$ and from \er{3.2} and
$M^m\cW_\n=\cW_\n M_o^m$ for all $m\in \Z$ we deduce that $(M^nf,
\cW_-\f_jQ_j^- f_n)=o(1)$. Thus we apply the same procedure for
function $\z_2$ and $\gb_2$ need to consider the sequence
$(f_n,\gb_1\f f_n)$. We have
$$
(f_n,\gb_1\f f_n)=(f_n,\z_2\gb_1\f f_n)+(f_n,\gb_2\gb_1\f f_n)=(f_n,\gb_2\gb_1\f f_n)+o(1).
$$
Repeating procedure we obtain $ (f_n,\f f_n)=(f_n,\gb \f f_n)+o(1)$,
where $\gb=\prod_1^\n \gb_j$. Then the operator  $\gb \f\in
\B_\iy(L^2(\R^d)$, since the functions $\f, \gb$ are compactly
supported and above arguments yield $(f_n,\gb\f f_n)=o(1)$. \BBox

\section {Appendix}
\setcounter{equation}{0}

\begin{lemma}
\lb{T5.1}  Let $\g>0$ and let $\p\in L^2(\R^d)$ and $\gr=\sum_1^\n
\gr_j, \ \gr_j={1\/a_j}\min\{\ve_j, d_j/2\}, \ve_j\ge 0$.

\no  i) If $\g+\gr>{1\/2}$, then
\[
\lb{5.1} \int_\R \la t\ra^{2\g}\|\vr^{\ve}e^{-it H_o}\p\|^2dt
\le C\|\p \|^2.
\]
ii) If $\ve_\ell+\gr-\gr_\ell+\g>{1\/2}$ for some $\ell\in \J_\n$ .
Then for all $s\ge1$ and some $\vt>0$
\[
\lb{5.2} \int_\R \la t\ra^{2\g}\|\vr^{\ve}\c(|k_\ell|>s)e^{-it
H_o}\p\|^2dt \le C s^{-\vt \ve_\ell}\|\p \|^2.
\]
\end{lemma}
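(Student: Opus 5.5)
The plan is to reduce \er{5.1} to a product of one-dimensional-type smoothing estimates by exploiting the tensor structure $e^{-itH_o}=\bigotimes_j e^{-ith_j}$ and $\vr^\ve=\prod_j\vr_j^{\ve_j}$.

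\textbf{Step 1: factorwise dispersive bounds.} For each $j$, I first establish
\[
\|\vr_j^{\ve_j}e^{-ith_j}\vr_j^{\ve_j}\|\le C\la t\ra^{-2\gr_j}.
\]
This follows from the estimate $\|\la x_j\ra^{\d}e^{-ith_j}\la x_j\ra^{\d}\|\le C\la t\ra^{-2\d/a_j}$ for $2\d\le d_j$, which is the $TT^*$ form of the dispersive/Mourre-type decay for the homogeneous symbol $p_j$ of order $a_j>1$. The same mechanism underlies \er{3.2} in Lemma \ref{T2.3}. In particular, the estimate is obtained from the kernel bound $|K_t(x)|\le Ct^{-d_j/a_j}$ together with $L^2$-unitarity, interpolated with weights. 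For $\ve_j$ beyond $d_j/2$ the exponent saturates at $d_j/(2a_j)$, which is exactly the definition of $\gr_j$. Taking tensor products gives
\[
\|\vr^{\ve}e^{-itH_o}\vr^{\ve}\|\le C\la t\ra^{-2\gr}.
\]

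\textbf{Step 2: $TT^*$ argument in time.} Next, consider the map $\cA:\p\mapsto \la t\ra^{\g}\vr^\ve e^{-itH_o}\p$ from $L^2(\R^d)$ to $L^2(\R_t;L^2(\R^d))$. Boundedness of $\cA$ is equivalent to boundedness of $\cA\cA^*$, whose kernel is $\la t\ra^{\g}\vr^\ve e^{-i(t-s)H_o}\vr^\ve\la s\ra^{\g}$, with operator norm bounded by $\la t\ra^{\g}\la s\ra^{\g}\la t-s\ra^{-2\gr}$. It then suffices to show that the scalar kernel $k(t,s)=\la t\ra^{\g}\la s\ra^{\g}\la t-s\ra^{-2\gr}$ is bounded on $L^2(\R)$, recalling that here $\la t\ra$ decays. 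If $2\gr>1$, Young's inequality already suffices. In general, I use Schur's test with weight $\la t\ra^{-\b}$: the condition $\g+\gr>1/2$ is exactly the homogeneity condition $\g+\g+2\gr>1$ needed for the weighted Hardy--Littlewood--Sobolev inequality (Stein--Weiss) in dimension one, with positivity of $\g$ covering the nonintegrable local part. I expect this Schur/Stein--Weiss verification, including the endpoint bookkeeping when $2\gr<1$, to be the main obstacle. The first thing to try is splitting into the regions $|t-s|<|t|/2$ and its complement.

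\textbf{Step 3: part ii).} For \er{5.2}, I insert $\c(|k_\ell|>s)$, which commutes with $e^{-itH_o}$. In the $\ell$-th factor, I use the high-frequency improvement analogous to \er{3.4}: on $|k_\ell|>s$ the group velocity is at least $cs^{a_\ell-1}$. Integration by parts as in Lemma \ref{T2.3} gives
\[
\|\vr_\ell^{\ve_\ell}e^{-ith_\ell}\c(|k_\ell|>s)\vr_\ell^{\ve_\ell}\|\le C\la ts^{\t_\ell}\ra^{-2\ve_\ell},
\]
and interpolating with the trivial bound yields a factor $s^{-\vt\ve_\ell}\la t\ra^{-2\ve_\ell(1-\k)}$ for small $\k>0$. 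Replacing $\gr_\ell$ by $\ve_\ell(1-\k)$ in Step 2, the hypothesis $\ve_\ell+\gr-\gr_\ell+\g>1/2$ (strict, so it survives a small $\k$) makes the same Schur argument work, and the result is \er{5.2}.
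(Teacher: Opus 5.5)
Your proposal has a genuine gap. Step 3 rests on a false estimate, and the strategy itself cannot reach the lemma as stated. That strategy is: bound $\|\vr^\ve e^{-i(t-t')H_o}\vr^\ve\|$ pointwise in time, then run a Schur test in $t$.

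\textbf{The two-sided bounds fail.} Lemma \ref{T2.3} is one-sided. Its integration by parts produces the factor $(t+\s)^{-1}$ only because $\cT^\pm$ restricts to $\s>0$, i.e. to outgoing states. With weights on both sides, incoming states break this, as the following construction shows.
\begin{itemize}
\item Let $g$ be a Schwartz function of $x_\ell$ with $\wh g$ supported in a unit ball centred at a point of modulus $2s$, and put $u=e^{ith_\ell}g$.
\item By non-stationary phase, $u$ lives (up to rapidly decaying tails) where $|x_\ell|\approx ts^{a_\ell-1}$. Hence $\|\vr_\ell^{-\ve_\ell}u\|\approx (ts^{a_\ell-1})^{\ve_\ell}\|u\|$.
\item On the other hand, $\vr_\ell^{\ve_\ell}e^{-ith_\ell}\c(|k_\ell|>s)\vr_\ell^{\ve_\ell}(\vr_\ell^{-\ve_\ell}u)=\vr_\ell^{\ve_\ell}g$, which has norm $\approx\|u\|$.
\end{itemize}
So the two-sided norm is $\gtrsim (ts^{a_\ell-1})^{-\ve_\ell}$, with exponent $\ve_\ell$ rather than $2\ve_\ell$. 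The same construction at frequency $\approx 1$ gives $\|\vr_j^{\ve_j}e^{-ith_j}\vr_j^{\ve_j}\|\gtrsim t^{-\ve_j}$. This contradicts Step 1 whenever $1<a_j<2$ (allowed by Condition P) and $0<\ve_j\le d_j/2$, since then $2\gr_j=2\ve_j/a_j>\ve_j$. Your justification also breaks there: for $a<2$ the kernel of $e^{-i|\D|^{a/2}}$ is unbounded (stationary phase at high frequencies), so there is no $t^{-d_j/a_j}$ dispersive bound to interpolate.

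\textbf{Correcting the exponents does not rescue the method.} With the true pointwise rates, your scalar majorant $w(t)w(t')\|\vr^\ve e^{-i(t-t')H_o}\vr^\ve\|$ is bounded on $L^2(\R)$ only if roughly $\g+\gr-\gr_\ell+\ve_\ell/2>\tfrac12$ in ii). In i) you would get $\ve_j/2$ in place of $\ve_j/a_j$ when $a_j<2$. Yet the lemma holds beyond this range. Take $\n=1$, $H_o=-\D$, $\tfrac12<\ve<1$ and $0<\g<\tfrac{1-\ve}2$. Then \er{5.2} follows from the high-energy Kato smoothness of $\vr^\ve$ (the Agmon--H\"ormander trace bound). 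But the pointwise norm decays only like $t^{-\ve}$, and your Schur majorant is unbounded.

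\textbf{What is missing.} The gain comes from oscillation in $t$, and any operator-norm bound on $\vr^\ve e^{-i(t-t')H_o}\vr^\ve$ discards it. The paper keeps it by doing the time integral first. After Plancherel,
$Y(\e)=\iint F_d(k-p,2\ve)\e(k)\e(p)\wh\p(k)\ol{\wh\p}(p)F_1(P(k)-P(p),2\g)\,dk\,dp$.
Here $\int\la t\ra^{2\g}e^{-it(P(k)-P(p))}dt$ becomes a kernel concentrated near the energy shell $P(k)=P(p)$, with singularity $|P(k)-P(p)|^{2\g-1}$. A Schur/H\"older test in momentum space then gives both \er{5.1} and the $s$-decay in \er{5.2}; this is Lemma \ref{T5.2}, where the Jacobian $r^{a-1}$ enters. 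Your route does work for part i) when all $a_j\ge 2$, where Step 1 holds up to an arbitrarily small loss absorbed by the strict inequality. It does not work for part ii), nor for part i) when some $a_j<2$.
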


\no {\bf Proof.} From strict inequality for $\g, \gr$ it is enough
to prove the estimates for $0<2\g<1, 0<2\ve_j<d_j, j\in \J_\n$ and
for $2\gr<1$ in the case i) and $\ve_\ell+\gr-\gr_\ell<{1\/2}$ in
the case ii). Then since we have $\gr_j={\ve_j\/a_j}, j\in \J_\n$
and we define the vector $A=(A_j)_1^\n\in\R_+^\n$ by
$$
A_\ell=\ca {1\/a_\ell},\ & if \qq \gr+\g>{1\/2}\\
         1     & if \qq \ve_\ell+\gr-\gr_\ell+\g>{1\/2}\ac.
$$
Thus the inequality for $\ve$ we rewrite in the form
$$
\tes (\ve,A)>{1\/2}-\g, \qqq 0<2\ve_j<d_j, \qq j\in \J.
$$
From these strick inequality we get that enough to prove  \er{5.1},
\er{5.2} for
$$
\tes \ve=\ve'+\d d',\qq d'=(d_j)_1^\n,\qq \g>\d,\qq \ve'\in \{t\in
\R_+^\n: (t,A)={1\/2}-\g,\  0<2t_j<d_j, \ \forall \ j\in \J_\n\}
$$
for  $\d>0$ small enough. Roughly speaking we prove \er{5.1},
\er{5.2} for  $\ve$ belonging to the layer
$$
1<2(\ve,A)+2\g <1+ 2\d(d',A),\qqq \d>0.
$$
We introduce functions
$$
f(k,d)={e^{-|k|}\/|k|^d},\qqq      F_d(k,b)=
(2\pi)^{-{d\/2}}\int_{\R^d}e^{-ikx}\la x\ra^{-b}dx,\qqq k\in \R^d,
b>0,
$$
For the function $F_d$ there exists the following estimate (see
\cite{BS73})
\[
\lb{5.3} |F_d(k,b)|\le C |k|^{b}f(k,b), \qqq k\in \R^d, \ d> b>0.
\]
Let $\e$ be the multiplication operator by the function $\e(k)\ge
0$.  We have
$$
\begin{aligned}
Y(\e)=\int_\R\la t\ra^{2\g} \|\la x\ra^{\ve} e^{-itH_o}\e \p\|^2dt=
\int_\R\la t\ra^{2\g} (\la x\ra^{2\ve} e^{-itH_o}\e\p,
 e^{-itH_o}\e\p)dt.
\\
=\int_{\R^{2d}}
F_d(k-p,2\ve)\e(k)\e(p)\wh\p(k)\ol{\hat\p}(p)\int_\R\la t\ra^{2\g}
e^{-it(P(k)-P(p))}dkdpdt
\\
=\int_{\R^{2d}}  F_d(k-p,2\ve)\e(k)\e(p)
\wh\p(k)\ol{\hat\p}(p)F_1(P(k)-P(p),2\g) dkdp.
\end{aligned}
$$
Introduce the function
$$
G(k,p)=|y_0|^{2\g}f(y_0,1)\prod_1^\n |y_j|^{2\ve_j}f(y_j,d_j),\qqq
y=k-p,\  y_j=k_n-p_j,\qq j\in \J_\n.
$$
Thus from \er{5.3} we obtain
\[
\lb{5.4} Y(\e)<C\int_{\R^{2d}} G(k,p) \e(k)\e(p)|\p(k)\p(p)|dkdp.
\]
Suppose that
\[
\lb{5.5} \int_{\R^{2d}} G(k,p) \e(k)\e(p)|\p(k)\p(p)|dkdp<C_1<\iy.
\]
Then using the well know estimates of the integral operators we obtain
\[
\lb{5.6} Y(\e)<C_1\|\p\|^2.
\]
From here we get \er{5.1} for $\e=1$ and \er{5.2} for
$\e=\c(|k_n|>s)$ and $C_1=Cs^{-\vt \ve_\ell}$.

We shall prove \er{5.5}. We introduce the number $\gr'=(\ve',
A)=\sum_{1}^\n\gr_j, $ where $\gr_j'=\ve_j'A_j$ and the functions
$$
b_n(\s)={(d_j/2)-\ve_j-(d_j\gr'-\gr_j')\s\/\g-\s},\qqq
\b_n(\s)={\gr_j'a_j-d_j\gr'+(d_j\gr'-\gr_j')\s\/\g-\s},\qqq \s\in\R.
$$
Note that
\[
\lb{5.7} b_j(0)>0,\qqq \b_j(1)>0,
\]
and
\[
\lb{5.8}
b_j+\b_j-d_j=\ca  0 &  if \qq j\ne \ell\\
               0 &  if  \qq j= \ell,\  \gr+\g>{1\/2}\\
               {\ve_\ell'(a_\ell-1)\/\g-\s}&  if  \qq j= \ell,\
               \gr+\g+\ve_\ell-\gr_j>{1\/2}\ac.
\]
We show that for any $\ell\in \J_\n$ there exists $\vt_j\in (0,1)$
such that
\[
\lb{5.9} b_j=b_j(\vt_n)\in (0,d_j), \qqq  \b_j=\b_j(\vt_j)>0.
\]
Let $\s_j$ be the solution of the equation $b_j(\s)=0$. There two
cases:

1) $0<\s_j\le 1$. Then there exists some $\vt_j<\s_j$ such that  the
identity \er{5.9} holds true, since $\b_j(\s_j)\ge d_j$.

2) $\s_j> 1$. Then  $b_j(1)>0$ and due to $\b_j(1)>0$ there exists
some $\vt_j<(0,1)$ such that the identity \er{5.9} holds true.

Let $r_j=|k_j|$ and we introduce functions
$$
G_0(k,p)=\prod_1^\n |y_j|^{-b_j}r_n^{-\b_j}e^{-|y_j|},\qq
 G_n(k,p)=f(y_0,1) |y_j|^{\vt_n}r_j^{a_j-\vt_j} \prod_1^\n\
 |y_n|^{-d_n\vt_n}r_n^{d_n(\vt_n-1)} e^{-|y_n|}.
$$
We have the identity
\[
\lb{5.10} G(k,p)=\e(k)\e(p)G_0^{2(\g-\d)}(k,p)\prod_1^\n
G_j^{2\gr'}(k,p)e^{-\d |y_j|}.
\]
For the function $G_0$ we obtain the estimate:
\[
\lb{5.11} G_0(k,p)\le
|y_\ell|^{-b_\ell}s_\ell^{-\b_\ell}e^{-|y_\ell|}\prod_{j\ne \ell}^\n
|y_j|^{-b_j}r_j^{-\b_j}e^{-|y_j|},\qqq if \qq |k_\ell|\ge s.
\]
We have the simple estimate
\[
\lb{5.12} \tes \sup_{p\in\R^d} \int_{\R^d} G_0^\t(k,p)dk <\iy \qqq
if \qq \gr+\g>{1\/2}, \ \t\in (0,1),
\]
and using \er{5.11} we have for $\gr+\g+\ve_\ell-\gr_\ell>{1\/2}$
\[
\lb{5.13} \sup_{p\in\R^d} \int_{\R^d}\c(|k_\ell|>s) G_0^\t(k,p)dk
<\iy \qqq if \tes \qq \gr+\g>{1\/2}, \ \t\in (0,1).
\]
From Lemma \ref{T5.2} we obtain
\[
\lb{5.14} \sup_n\sup_{p\in\R^d} \int_{\R^d} G_n^\t(k,p)dk <\iy \qqq
if \qq \tes \gr+\g>{1\/2}, \ \t\in (0,1).
\]
From \er{5.10} we get
\[
\lb{5.15}
 \int_{\R^d} G(k,p)dk < \int_{\R^d} \e(k)\e(p)G_0^{2(\g-\d)}(k,p)
 \prod_1^\n G_j^{2\gr_j'}(k,p)dk.
\]
From \er{5.12}-\er{5.14} and from an estimate
$$
2(\g-\d)+2\sum_1^\n \gr_j=2(\g-\d+\gr')=1-2\d<1,
$$
we see that we can apply the Holder inequality to the right hand
side  of \er{5.15} with $\e=1$ in the case \er{5.15} and
$\e=\c(|k_\ell|>s)$ in the case \er{5.2}. \BBox

\begin{lemma}
\lb{T5.2}
Let $k, p\in \R^d, u\in \dS^{d-1}$ and $r=|k|, k=ru $.

\no 1) Let $0<b<d-1$. Then
\[
\lb{5.16} X(r)=\int_{\dS^{d-1}} {e^{-|ru-p|}\/|r\n-p|}du
\le C{e^{-{1\/2}|r-|p||}\/r^b+r^{d-1}},
\]
where $C$ does not depend on $p$.

\no 2)  Let $b+\b\le d-1, b,\b\ge 0, a\ge 1$,  and let the function
$$
G(k,p)= {e^{-|r^a-y|-|k-p|}\/|r^a-y||r-p|^br^{1+\b-\a}}, \qq y\in \R.
$$
 Then
\[
\lb{5.17}         \sup_{p,y\in \R^d} \int_{\R^d}  G^\t(k,p)dk<\iy
\qq \forall \ \t\in (0,1).
\]
\end{lemma}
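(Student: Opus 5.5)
The plan is to prove both parts by passing to polar coordinates $k=ru$, $u\in\dS^{d-1}$. The angular integral is handled by 1), and the remaining one–dimensional radial integral by elementary estimates. Throughout, the denominators are read as $|ru-p|^{b}$ in \er{5.16} and $|k-p|^{b}$ in \er{5.17}.

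\emph{Part 1).} Set $\rho=|p|$ and split the integrand with $e^{-|ru-p|}\le e^{-\frac12|ru-p|}e^{-\frac12|r-\rho|}$, using $|ru-p|\ge |r-\rho|$. This extracts the factor $e^{-\frac12|r-\rho|}$, so it remains to show $\int_{\dS^{d-1}}|ru-p|^{-b}e^{-\frac12|ru-p|}du\le C\min\{r^{-b},r^{1-d}\}$, uniformly in $p$. I would treat two regimes.

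\emph{Small $r\le 1$.} Use $|ru-p|=r|u-p/r|$. Since $b<d-1$, the function $|u-q|^{-b}$ is integrable on the sphere uniformly in $q\in\R^d$; the worst case is $q\in\dS^{d-1}$, where it is a $(d-1)$-dimensional singularity of order $b$. This gives the bound $Cr^{-b}$.

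\emph{Large $r\ge1$.} Rescale $du=r^{1-d}d\s(x)$, where $d\s$ is surface measure on the sphere $\{|x|=r\}$. The integral becomes $r^{1-d}\int_{|x|=r}|x-p|^{-b}e^{-\frac12|x-p|}d\s(x)$. Cut the sphere into the part inside the unit ball around the nearest point to $p$ and the rest. Since the sphere has curvature $1/r\le1$, it is locally a graph over $\R^{d-1}$ with bounded distortion. The inner part is then bounded by $\int_{|z|<2}|z|^{-b}dz<\iy$ in $\R^{d-1}$, using $b<d-1$. The outer part is bounded by the exponential: the number of unit patches at distance $\sim n$ from $p$ is $O(n^{d-2})$, and $\sum n^{d-2}e^{-n/2}<\iy$. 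Together this yields $Cr^{1-d}$, and $\min\{r^{-b},r^{1-d}\}\asymp (r^b+r^{d-1})^{-1}$.

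\emph{Part 2).} I read the last exponent in $G$ as $r^{1+\b-a}$. In polar coordinates,
$$\int_{\R^d}G^\t dk=\int_0^\iy \frac{e^{-\t|r^a-y|}}{|r^a-y|^{\t}}\,r^{\t(a-1-\b)}\,r^{d-1}\Big(\int_{\dS^{d-1}}\frac{e^{-\t|ru-p|}}{|ru-p|^{\t b}}du\Big)dr .$$
Apply 1) with $b$ replaced by $\t b$, which is still $<d-1$, to bound the inner integral by $C(r^{\t b}+r^{d-1})^{-1}$. This removes the dependence on $p$, leaving a one-dimensional integral in $r$ that must be bounded uniformly in $y$. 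Substitute $\l=r^a$, so $dr=a^{-1}\l^{1/a-1}d\l$. The singular factor $|\l-y|^{-\t}e^{-\t|\l-y|}$ is then integrable in $\l$ uniformly in $y$, since $\t<1$. It remains to check that the weight
$$w(\l)=r^{\t(a-1-\b)+d-1}(r^{\t b}+r^{d-1})^{-1}\l^{1/a-1},\qquad r=\l^{1/a},$$
is bounded on every unit interval in $\l$, uniformly in its location, with at worst an integrable power singularity at $\l=0$. Near $r=0$ the exponent of $r$ is $\t(a-1-\b)+d-1-\t b-(a-1)$. This exceeds $-a$, which is exactly integrability in $\l$, when $d-1-\t(b+\b)>-1+(1-\t)(a-1)$; this follows from $b+\b\le d-1$ and $\t<1$, possibly after shrinking $\t$ slightly. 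For large $r$ the exponent is $\t(a-1-\b)-(a-1)\le0$, so $w$ is bounded. A Hölder split then gives a bound independent of $p$ and $y$.

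The main obstacle I expect is the uniform large-$r$ estimate in 1), i.e. the geometric covering argument on the large sphere near $p$, which needs uniformity in both $r$ and $p$. The small-$r$ exponent bookkeeping in 2) also needs care, and is where the hypothesis $b+\b\le d-1$ enters.
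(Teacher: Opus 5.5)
Your Part 1 is correct, but it takes a heavier route than the paper. The paper lets $\vartheta$ be the angle between $u$ and $p$. It uses $|ru-p|\ge|r-|p||$ for half of the exponential, $|ru-p|\ge r\sin\vartheta$ for the other half and for the singular factor, and $\sin\vartheta\ge\sigma\vartheta$ on $[0,\pi/2]$ (symmetrically near $\pi$). This reduces everything to $Cr^{-b}e^{-|r-|p||/2}\int_0^\pi e^{-r\sigma\vartheta/2}\vartheta^{d-2-b}\,d\vartheta$. That is $O(r^{-b})$ for $r\le1$ and, after $t=r\sigma\vartheta/2$, $O(r^{1-d})$ for $r\ge1$, uniformly in $p$. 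Your rescaling for small $r$ and covering argument for large $r$ (uniform area bounds for pieces of a sphere of radius $r\ge1$ inside balls) also work. However, the single inequality $|ru-p|\ge r\sin\vartheta$ turns the step you expected to be the main obstacle into a one-line computation. Your Part 2 has the same skeleton as the paper: polar coordinates, Part 1 with $\tau b<d-1$ for the angular integral, then $\lambda=r^a$.

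The gap is in your last step. A weight with an integrable power singularity at $\lambda=0$, multiplied by $|\lambda-y|^{-\tau}e^{-\tau|\lambda-y|}$, need not have integral bounded uniformly in $y$. As $y\to0$ the two singularities merge; for instance $w=\lambda^{-1/2}$ with $\tau\ge1/2$ diverges at $y=0$. If $w(\lambda)\le C\lambda^{\kappa}$ on $(0,1)$, what you need is $\kappa-\tau>-1$, not $\kappa>-1$. This is also exactly the condition under which a H\"older split with $\tau p<1$ and $\kappa p'>-1$ exists.

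Your exponent bookkeeping does not hold together either. The displayed inequality is not equivalent to ``exponent $>-a$''; its right side should be $-a+(1-\tau)(a-1)$. As written it is false in general, e.g.\ for $d=2$, $b+\beta=1$, $a=10$, $\tau=1/10$. Shrinking $\tau$ is not permitted, since the claim is for every $\tau\in(0,1)$.

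The repair is short. Here $\kappa=\bigl(d-1-\tau(b+\beta)-(1-\tau)(a-1)\bigr)/a$, and $\kappa-\tau>-1$ is equivalent to $\tau(1+b+\beta)<d$. This holds for all $\tau\in(0,1)$ because $b+\beta\le d-1$. It is what the paper's last step encodes: it bounds the weight by $t^{-\beta'}$ with $\beta'=(1-\tau)(1-1/a)\ge-\kappa$. It then uses $\beta'+\tau=(a-1+\tau)/a<1$, a condition on the sum of the two exponents rather than on each one separately.
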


\no {\bf Proof.}  Let $\vt$ be an angle between vectors $k,p$.  Then
\[
\lb{5.18}
|r-p|\ge r\sin \vt, \qqq 0\le \vt \le \pi,
\]
\[
\lb{5.19}
\s \vt\le \sin \vt \le {\vt\/\s}, \qqq 0\le \vt \le {\pi\/2},
\]
for some $\s>0$. From \er{5.18}, \er{5.19} we obtain
$$
X(r)\le C e^{-|r-|p||/2} \int_{0}^\pi  e^{-{r\/2}\sin \vt }
 {\sin^{d-2} \vt\/|r\sin \vt|^{b}  }  d\vt
\le C r^{-b}e^{-|r-|p||/2} \int_{0}^\pi  e^{-{r\s \vt\/2} } \vt^{d-2-b} d\vt.
$$
This yields
$$
X(r)\le C r^{-b}e^{-|r-|p||/2} ,\qqq 0<r\le 1,
$$
and
$$
X(r)\le C r^{1-d}e^{-|r-|p||/2} \int_{0}^\iy  e^{-t}  t^{d-2-b} dt\le
 C r^{1-d}e^{-|r-|p||/2}.
$$
We shall prove \er{5.17}. We have
$$
Y(p,y)=\int_{\R^d}  G^\t(k,p)dk=\int_0^\iy {e^{-\t
|r^a-y|}\/|r^a-y|^\t r^{\t(1+\b-\a)}} \int_{\dS^{d-1}}
{e^{-\t|ru-p|}\/|ru-p|^\t}du\ r^{d-1}dr.
$$
Using \er{5.16} we obtain
$$
Y(p,y)\le \int_0^\iy    {e^{-\t |r^a-y|}\/|r^a-y|^\t
r^{\t(1+\b-\a)}} {r^{d-1}dr\/r^{\t b+r^{d-1}}}\le C\le \int_0^\iy
{e^{-\t |r^a-y|}\/|r^a-y|^\t r^{\t(1+\b-\a)}}{r^{d-1}dr\/r^{\t
b+r^{d-1}}}.
$$
Using the new variable $t=r^q$ we rewrite the last integral in the
form $ \int_0^\iy {e^{-\t |t-y|}\/t^\b |t-y|^a}dt\le C<\iy$, where
$\b=(1-\t)(1-{1\/a})$, since   $\b+\t={a+\t-1\/a}<1$, and $C$ does
not depend on $p,y$. \BBox

\end{document}